\documentclass[11pt]{article}
\usepackage{latexsym}
\usepackage[top=1in, bottom=1in, left=1in, right=1in]{geometry}

\usepackage{graphicx} 
\usepackage{color} 
\usepackage{amsmath, amsthm, amssymb}
\usepackage{enumerate}
\usepackage{float}
\usepackage{url}
\usepackage{stackrel}
\usepackage{mathrsfs,dsfont}
\usepackage{caption}
\usepackage{subcaption}
\usepackage{wrapfig}
\usepackage{bbm}
\usepackage{bm}
\usepackage{tikz}
\usepackage{pdfpages}

\newtheorem{theorem}{Theorem}[section]
\newtheorem{corollary}[theorem]{Corollary}
\newtheorem{proposition}[theorem]{Proposition}
\newtheorem{lemma}[theorem]{Lemma}

\theoremstyle{definition}
\newtheorem{example}[theorem]{Example}

\newtheorem{definition}[theorem]{Definition}

\newtheorem{algorithm}[theorem]{Algorithm}
\newtheorem{question}[theorem]{Question}

\theoremstyle{remark}
\newtheorem{remark}[theorem]{Remark}

\newcommand{\A}{U_{2345}}
\newcommand{\R}{\mathbb{R}}

\def\been{\begin{enumerate}}
\def\enen{\end{enumerate}}

\def\CC{\mathcal C}

\begin{document}

\title{Obstructions to convexity in neural codes}

\author{Caitlin Lienkaemper\footnote{Corresponding author; Department of Mathematics, Harvey Mudd College,
301 Platt Boulevard,
Claremont CA 91711-5901, USA; {\tt clienkaemper@g.hmc.edu}}, 
Anne Shiu\footnote{Department of Mathematics, Texas A\&M University, Mailstop 3368, College Station, Texas 77843--3368, USA; {\tt annejls@math.tamu.edu}}, 
and Zev Woodstock\footnote{Department of Mathematics and Statistics, James Madison University, Roop Hall 305, MSC 1911, Harrisonburg, Virginia 22807, USA; Present address: Department of Mathematics, Box 8205, NC State University, Raleigh, North Carolina 27695--8205
{\tt 
zwoodst@ncsu.edu} }
}

\date{December 18, 2016}

\maketitle

\abstract{
How does the brain encode spatial structure?  One way is through hippocampal neurons called place cells, which
become associated to convex regions of space known as their receptive fields: each place cell fires at a high rate precisely when the animal is in the receptive field. The firing patterns of multiple place cells form what is known as a convex neural code. How can we tell when a neural code is convex?  To address this question, Giusti and Itskov identified a local obstruction, defined via the topology of a code's simplicial complex, and proved that convex neural codes have no local obstructions. Curto {\em et al.}\ proved the converse for all neural codes on at most four neurons. Via a counterexample on five neurons, we show that this converse is false in general. Additionally, we classify all codes on five neurons with no local obstructions.  This classification is enabled by our enumeration of  connected simplicial complexes on 5 vertices up to isomorphism.  Finally, we examine how local obstructions are related to maximal codewords (maximal sets of neurons that co-fire).  Curto {\em et al.}\ proved that a code has no local obstructions if and only if it contains certain ``mandatory'' intersections of maximal codewords. We give a new criterion for an intersection of maximal codewords to be non-mandatory, and prove that it classifies all such non-mandatory codewords for codes on up to five  neurons.  
}
\vskip .1in

\noindent {\bf Keywords:} neural code, place cell, convex, good cover, simplicial complex, homology

\vskip .1in
\noindent {\bf MSC codes:} 
05E45,   
55U10,  
92C20   

\section{Introduction}

The brain's ability to navigate within and represent the physical world is fundamental to our everyday experience and ability to function.  How does the brain accomplish this?  For their work shedding light on this question, neuroscientists John O'Keefe, May Britt Moser, and Edvard Moser won the 2014 Nobel Prize in Physiology and Medicine. Their work led to the discovery of place cells, grid cells, and head direction cells, all of which take part in rodents' and other animals' mechanisms for representing, navigating through, and forming memories of their environments.  

This paper focuses on place cells, which are hippocampal neurons which become associated to regions of the environment known as their receptive fields or place fields. When an animal is located in a place cell's receptive field, the place cell fires at a higher rate than when the animal is outside the place field.  The firing patterns of a collection of place cells describe an animal's position within its environment. These receptive fields have been experimentally observed to be approximately convex regions of space. {\em Convex codes} are those neural codes (firing patterns) that can arise from the activity of place cells with convex receptive fields. 

Which neural codes are convex?  What are signatures of convexity or non-convexity?  Curto {\em et al.}~\cite{what-makes,neural_ring} and Giusti and Itskov~\cite{no-go} addressed these questions using combinatorial topology and commutative algebra, and gave complete answers for codes on up to four neurons.  
Curto {\em et al.} achieved this classification by organizing neural codes according to their simplicial complexes,  and, additionally, by focusing on {\em local obstructions} to convexity. 
Earlier, Giusti and Itskov had introduced this concept and proved that codes with local obstructions are necessarily non-convex. Curto {\em et al.}\ proved that local obstructions have the following interpretation: for each simplicial complex $\Delta$, there is a set of ``mandatory'' codewords whose presence in a code (whose simplicial complex is $\Delta$) is required to avoid local obstructions~\cite{what-makes}.  Therefore, a code must contain all its mandatory codewords to be convex.  Moreover, the mandatory codewords are necessarily intersections of maximal codewords.  This motivates the following questions:
\begin{question} \label{q:1}
Is every code which has no local obstructions convex?
\end{question}
\begin{question} \label{q:2}
Is every intersection of maximal codewords a mandatory codeword?
\end{question}
\begin{question} \label{q:3}
For codes on five neurons, which have local obstructions?  Which are convex?
\end{question}
\noindent
Our work addresses all three questions.  

In a preliminary version of~\cite{what-makes}, the answer to Question~\ref{q:1} was conjectured to be ``yes'', and this was verified for codes on up to four neurons. Moreover, this was the main open problem in this subject.  Here we demonstrate that even for codes on five  neurons, the answer is in fact ``no'': Theorem~\ref{thm:counterex} gives the first example of a non-convex code with no local obstructions.

For Question~\ref{q:2}, again the first negative answer appears in codes on five neurons~\cite{what-makes}.  Here we give a sufficient criterion for an intersection of maximal codewords to be non-mandatory (Theorem~\ref{thm:tree}).  Furthermore, our criterion classifies all such non-mandatory codewords for codes on five neurons. 
In other words, our result shows that codes with no local obstructions on at most five neurons are precisely those codes that contain all intersections of maximal codewords together with those codes that satisfy our new criterion.

Finally, we completely answer the first part of Question~\ref{q:3} by first enumerating the 157 connected simplicial complexes on five vertices, and then determining for each simplicial complex which codewords are mandatory.  Here we recall that a code has a local obstruction if and only if it is missing a mandatory codeword.  Our enumeration is therefore an important step toward answering the second part of the Question~\ref{q:3}, which we leave for future work.

\section{Background} \label{sec:background}
In this section, we introduce our assumptions, definitions, and notation. We approximate neural activity as binary: under this model, neurons are either firing or they are not. We encode the combinatorial data generated by the firing patterns of place cells as a neural code (Definition~\ref{def:code}). We index the neurons with the positive integers $\{1, \ldots, n\}=:[n]$. 

\subsection{Neural codes}
Biologically, a codeword corresponds to a set of neurons which fire together while no other neurons fire, and a neural code describes which groups of neurons are observed firing together:
\begin{definition} \label{def:code}
A {\em neural code} $\mathcal{C}$ on $n$ neurons is a set of subsets of $[n]$ (called {\em codewords}), i.e.\ $\mathcal{C} \subseteq 2^{[n]}$.  A {\em maximal} codeword in $\mathcal{C}$ is a codeword that is not properly contained in any other codeword in $\mathcal{C}$. 
\end{definition}

\begin{definition} \label{def:realize}
For a neural code $\mathcal C$ on $n$ neurons, a collection $\mathcal U = \{U_1,U_2, \ldots, U_n\}$ of subsets of a set $X$ \emph{realizes} $\mathcal C$ if a codeword $\sigma$ is in $\mathcal C$ if and only if $\left( \bigcap_{i\in\sigma} U_i \right) \setminus \bigcup_{i\notin\sigma}{U_i}$ is nonempty. \end{definition}
\noindent
In this paper we make the simplifying assumption that $X \supsetneq \bigcup_{i \in [n]} U_i$, i.e.\ the empty set is a codeword in every code. 

\begin{definition} \label{def:code-ppties}
A neural code is: 
\begin{enumerate}
	\item {\em intersection-complete} if it is closed under taking intersections. 
	\item {\em max-intersection-complete} if it is closed under taking intersections of maximal codewords. 
	\item a {\em good-cover} code if it can be realized by a good cover $\mathcal U = \{U_1, U_2, \ldots, U_n\}$ of some set $X \subseteq \mathbb{R}^d$.  (Recall that $\mathcal{U}$ is a {\em good cover} of $X$ if the $U_i$'s are contractible open sets that cover $X$ and each intersection $U_{i_1} \cap U_{i_2} \cap \cdots \cap U_{i_k}$ is contractible or empty.)
	\item \emph{convex} if it can be realized by a collection of convex open sets $U_1, U_2, \ldots, U_n \subseteq \R^d$. The \emph{minimal embedding dimension} of a code is the smallest value of $d$ for which this is possible.  
\end{enumerate}
\end{definition}

\begin{example}
Consider the code $\mathcal C = \{\{1,2\}, \{1,3\}, \{2,3\}, \{1\}, \{2\}, \{3\}, \emptyset \}$. We will generally write this as $\CC = \{  \mathbf{12,13,23},1,2,3\}$, where the maximal codewords are marked in bold. 
We interpret this code to mean that each pair of neurons fires together and each neuron fires alone, but all three neurons never fire at the same time. This code is intersection-complete, max-intersection-complete, a good-cover code, and convex, as realized here:
 \begin{figure}[h]
 \centering
 \includegraphics[width = 2.5 in]{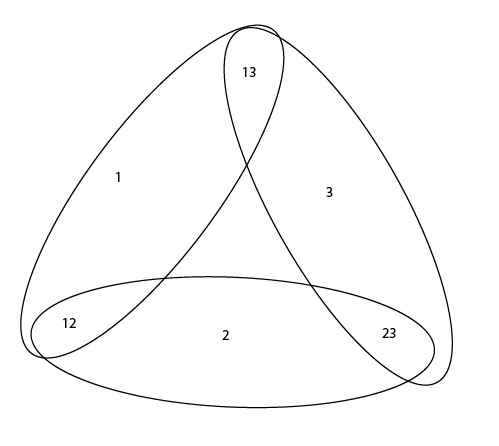}
 \end{figure}
\end{example}

\begin{example} \label{ex:non-convex}
 The simplest example of a neural code that is not convex is the three-neuron code $\CC=\{\mathbf{12,13}, \emptyset \}$. To see this, suppose $\CC$ were convex; then there would exist convex open sets $U_1, U_2$, and $U_3$ such that $U_1=U_2\cup U_3$, and $U_2\cap U_3=\emptyset$. We see that $U_2$ and $U_3$ would form a disconnection of the open set $U_1$, thus $U_1, U_2, $ and $U_3$ cannot all be convex open sets. In fact, since we did not use convexity in this argument, but only connectedness, we have shown that $\mathcal C$ is not a good-cover code either. 
\end{example}

\begin{remark}
Intersection patterns of convex (and other types of) sets is a well-established subject; for instance, see~\cite{helly, kalai, intersection-gph-book, tancer-survey} and the references therein.  Nevertheless, the related questions we consider here---which focus on {\em all} regions cut out by the convex sets, in addition to which sets intersect---have only recently received attention.
\end{remark}

\subsection{Simplicial complexes}
An abstract {\em simplicial complex} on $n$ vertices is a nonempty set of subsets ({\em faces}) of $[n]$ that is closed under taking subsets. 
(Thus, the empty set is an element of every simplicial complex.) That is, if $\Delta$ is a simplicial complex, then $\sigma\in\Delta$ and $\tau\subset\sigma$ implies $\tau\in \Delta$.   {\em Facets} are the faces of a simplicial complex that are maximal with respect to inclusion.  

For a code $\mathcal{C}$ on $n$ neurons, $\Delta(\mathcal{C})$ is the smallest simplicial complex on $[n]$ that contains $\mathcal{C}$:
\begin{align*}
	\Delta(\mathcal{C}) ~:=~ \{\omega \subseteq [n] \mid \omega \subseteq \sigma {\rm ~for~some~} \sigma \in \mathcal{C} \}~.
\end{align*}
Note that two codes on $n$ neurons have the same simplicial complex $\Delta$ if and only if they have the same maximal codewords (which are the facets of $\Delta$). 
For a face $\sigma \in \Delta$, the {\em link of $\sigma$ in $\Delta$} is the simplicial complex
\begin{align*}
{\rm Lk}_{\Delta}(\sigma) ~:=~ \{\omega \in \Delta \mid \sigma \cap \omega = \emptyset, \ \ \sigma \cup \omega \in \Delta \}~.
\end{align*}  
Next, the {\em restriction} of $\Delta$ to $\sigma$ is the simplicial complex
\begin{align*}
\Delta|_{\sigma} ~:= \{\omega \in \Delta \mid \omega \subseteq \sigma \}~.
\end{align*}
 Finally, a simplicial complex is {\em contractible} if its geometric realization is contractible.  

\subsection{Local obstructions} \label{sec:loc-obs}
Here we introduce local obstructions, which prevent a code from being convex, and furthermore prevent a code from being a good-cover code (Proposition~\ref{prop:prior-results}).
\begin{definition} \label{def:local-obs}
Let $\mathcal{C}$ be a code on $n$ neurons, let $\Delta=\Delta(\CC)$, and let $\mathcal{U}=\{U_1,U_2,\dots, U_n\}$ be any collection of open sets that realizes $\CC$.  The code $\CC$ has a \emph{local obstruction} if there exist disjoint, nonempty sets $\sigma, \tau \subseteq [n]$ such that:
	\begin{enumerate}
	\item $\left(\cap_{i \in \sigma} U_i \right) \cap U_j$ is nonempty for all $j \in \tau$, 
    \item $\left( \cap_{i \in \sigma} U_i \right)  ~\subseteq~ \left( \cup_{j \in \tau} U_j \right)$, and
    \item ${\rm Lk}_{\Delta|_{\sigma \cup \tau}} (\sigma) $ is not contractible.
	\end{enumerate}
\end{definition}
\noindent
It is important to note that the definition of local obstruction does not depend on the choice of realization $\mathcal{U}$.  Indeed, this can be seen from the following characterization of codes with local obstructions, due to Curto~{\em et al.}~\cite[Theorem 1.6]{what-makes}:
\begin{proposition}[Characterization of codes with local obstructions via maximal codewords] \label{prop:loc-obs}
A neural code $\mathcal{C}$ has a \emph{local obstruction} if and only if some nonempty intersection of maximal codewords is not in $\mathcal{C}$ and has a non-contractible link. More precisely, if $\mathcal{M} = \{M_1, ..., M_m\}$ is the set of maximal codewords of $\CC$, we say that $\mathcal{C}$ has a \emph{local obstruction} if for some $I \subseteq [m]$,
 \begin{enumerate}
 \item $\sigma : = \bigcap_{i \in I} M_i$ is nonempty,
 \item $\sigma \notin \mathcal{C}$, and
 \item ${\rm Lk}_{\Delta(\mathcal{C})}(\sigma)$ is not contractible.
 \end{enumerate}
\end{proposition}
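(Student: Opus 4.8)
The plan is to prove each implication of the stated equivalence, moving freely between a realization $\mathcal{U}=\{U_1,\dots,U_n\}$ and the combinatorics of $\mathcal{C}$ via the usual dictionary: writing $A_\omega := (\bigcap_{i\in\omega}U_i)\setminus(\bigcup_{i\notin\omega}U_i)$ for the atom of $\omega\subseteq[n]$, the atoms partition $X$ and $\bigcap_{i\in\omega}U_i=\bigsqcup_{\rho\supseteq\omega}A_\rho$; hence $\omega\in\Delta(\mathcal{C})$ iff $\bigcap_{i\in\omega}U_i\neq\emptyset$, and $\omega\in\mathcal{C}$ iff $A_\omega\neq\emptyset$. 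I will also use a purely combinatorial ``cone lemma'': if a face $\sigma$ is properly contained in the intersection $\sigma^{\ast}$ of all facets of $\Delta(\mathcal{C})$ containing it, then any $v\in\sigma^{\ast}\setminus\sigma$ is an apex of ${\rm Lk}_{\Delta(\mathcal{C})}(\sigma)$, so that link is contractible; equivalently, if ${\rm Lk}_{\Delta(\mathcal{C})}(\sigma)$ is not contractible then $\sigma=\sigma^{\ast}$ is automatically an intersection of maximal codewords. Since every facet of $\Delta(\mathcal{C})$ is a maximal codeword and hence lies in $\mathcal{C}$, and $\emptyset\in\mathcal{C}$, this reduces the Proposition to the equivalence: $\mathcal{C}$ has a local obstruction in the sense of Definition~\ref{def:local-obs} if and only if there is a face $\sigma\in\Delta(\mathcal{C})\setminus\mathcal{C}$ with ${\rm Lk}_{\Delta(\mathcal{C})}(\sigma)$ not contractible.

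For the implication from this combinatorial condition to Definition~\ref{def:local-obs}, suppose $\sigma\in\Delta(\mathcal{C})\setminus\mathcal{C}$ with ${\rm Lk}_{\Delta(\mathcal{C})}(\sigma)$ not contractible; then $\sigma\neq\emptyset$ and $\sigma$ is not a facet, so its link has a vertex. Fix any realization $\mathcal{U}$ and set $\tau:=\{\,j\in[n]\setminus\sigma : \sigma\cup\{j\}\in\Delta(\mathcal{C})\,\}$, the vertex set of ${\rm Lk}_{\Delta(\mathcal{C})}(\sigma)$, which is nonempty and disjoint from $\sigma$. I would verify the three conditions of Definition~\ref{def:local-obs} directly: condition~1 because each $j\in\tau$ lies with $\sigma$ in some maximal codeword $M$, whose nonempty atom $A_M$ is contained in $(\bigcap_{i\in\sigma}U_i)\cap U_j$; condition~2 because a point of $\bigcap_{i\in\sigma}U_i$ lying in no $U_j$ with $j\in\tau$ would belong to an atom $A_\rho$ with $\sigma\subseteq\rho$ and $\rho\cap\tau=\emptyset$, forcing $\rho=\sigma$ and contradicting $\sigma\notin\mathcal{C}$; and condition~3 because every face of ${\rm Lk}_{\Delta(\mathcal{C})}(\sigma)$ has all its vertices in $\tau$, so ${\rm Lk}_{\Delta|_{\sigma\cup\tau}}(\sigma)={\rm Lk}_{\Delta(\mathcal{C})}(\sigma)$, which is not contractible. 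As $\mathcal{U}$ was arbitrary, this direction in fact produces a local obstruction in \emph{every} realization, which is one half of the realization-independence asserted after Definition~\ref{def:local-obs}.

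For the converse, suppose $(\sigma,\tau)$ is a local obstruction with respect to some realization $\mathcal{U}$. Condition~1 (with $\tau\neq\emptyset$) gives $\bigcap_{i\in\sigma}U_i\neq\emptyset$, so $\sigma\in\Delta(\mathcal{C})$; and if $\sigma$ were in $\mathcal{C}$ a point of $A_\sigma$ would lie in no $U_j$ with $j\in\tau$, contradicting condition~2, so $\sigma\in\Delta(\mathcal{C})\setminus\mathcal{C}$. What remains is to exhibit a face in $\Delta(\mathcal{C})\setminus\mathcal{C}$ whose \emph{full} link is not contractible; condition~3 only says that ${\rm Lk}_{\Delta|_{\sigma\cup\tau}}(\sigma)$, which equals the induced subcomplex ${\rm Lk}_{\Delta(\mathcal{C})}(\sigma)|_{\tau}$, is not contractible. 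I would choose $(\sigma,\tau)$ with $|\sigma|$ maximal among all local obstructions of $\mathcal{C}$ with respect to $\mathcal{U}$, and then argue that ${\rm Lk}_{\Delta(\mathcal{C})}(\sigma)$ is not contractible. Suppose it were. Conditions~1--2 force every vertex of ${\rm Lk}_{\Delta(\mathcal{C})}(\sigma)$ outside $\tau$ to be adjacent to $\tau$, and the non-contractibility of the induced subcomplex on $\tau$ must be ``filled'' using some vertex $v\notin\sigma\cup\tau$; for such a $v$ we have $\sigma\cup\{v\}\notin\mathcal{C}$ automatically from condition~2 (since $v\notin\tau$), and ${\rm Lk}_{\Delta(\mathcal{C})}(\sigma\cup\{v\})={\rm Lk}_{{\rm Lk}_{\Delta(\mathcal{C})}(\sigma)}(v)$ should retain non-contractibility, so by the construction of the previous paragraph there is a local obstruction whose first set is $\sigma\cup\{v\}$, contradicting maximality of $|\sigma|$. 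For the resulting extremal $\sigma$, the cone lemma makes it an intersection of maximal codewords, so all three conditions of the Proposition hold; combined with the previous paragraph, this also gives the full realization-independence.

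The main obstacle is precisely this last step: upgrading ``the induced subcomplex on $\tau$ is not contractible'' to ``${\rm Lk}_{\Delta(\mathcal{C})}(\sigma)$ is not contractible.'' The naive hope is false --- a simplicial complex can be contractible while an induced subcomplex is not, and one can even build an open cover whose nerve is contractible while a sub-cover of it still covers and has a non-contractible nerve --- so a purely homotopy-theoretic comparison of $\bigcap_{i\in\sigma}U_i$ with the nerves of $\{U_j\cap\bigcap_{i\in\sigma}U_i\}_{j}$ and of its sub-cover indexed by $\tau$ (e.g.\ via a straight-line homotopy between the two nerve maps) is not by itself enough: it only places the image of $H_\ast(\bigcap_{i\in\sigma}U_i)$ inside the image of $H_\ast({\rm Lk}_{\Delta(\mathcal{C})}(\sigma)|_{\tau})\to H_\ast({\rm Lk}_{\Delta(\mathcal{C})}(\sigma))$. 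What rescues the argument is the extremal choice of $\sigma$ --- equivalently, an induction that pushes $\sigma$ upward through apex vertices of its link --- and the delicate part is verifying that the push-up step really does produce a strictly larger local obstruction.
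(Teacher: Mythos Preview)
The paper does not prove this proposition at all: immediately before the statement it says the characterization is ``due to Curto~{\em et al.}~[Theorem~1.6],'' and no argument is given here. So there is no in-paper proof to compare against; your proposal is being measured against a result the authors import wholesale.

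On its merits: your cone lemma and the direction ``combinatorial condition $\Rightarrow$ Definition~\ref{def:local-obs}'' are fine, and your observation that $\sigma\in\Delta(\mathcal C)\setminus\mathcal C$ whenever $(\sigma,\tau)$ is a local obstruction is correct. The genuine problem is the push-up step in the converse. You assert that for some $v\notin\sigma\cup\tau$ the \emph{full} link ${\rm Lk}_{\Delta(\mathcal C)}(\sigma\cup\{v\})={\rm Lk}_{{\rm Lk}_{\Delta(\mathcal C)}(\sigma)}(v)$ ``should retain non-contractibility,'' and then invoke the forward construction. That claim is false, not merely unjustified. Take $\Delta$ with facets $\{1,3,4,5\}$ and $\{2,3,4,5\}$, let $\mathcal C$ consist of just these two facets and $\emptyset$, and set $\sigma=\{5\}$, $\tau=\{1,2\}$. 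Then $L:={\rm Lk}_\Delta(\sigma)$ has facets $\{1,3,4\}$ and $\{2,3,4\}$; it is contractible, $L|_\tau$ is two isolated points, and every facet meets $\tau$ --- yet ${\rm Lk}_L(3)$ and ${\rm Lk}_L(4)$ are both paths (hence contractible). So no single-vertex enlargement gives a non-contractible \emph{full} link, and your route through ``the construction of the previous paragraph'' breaks.

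What does survive is that for $v=3$ (or $v=4$) the pair $(\sigma\cup\{v\},\tau_v)$ with $\tau_v=\{j\in\tau:\{v,j\}\in L\}$ is still a local obstruction in the sense of Definition~\ref{def:local-obs}: conditions~1--2 you already checked, and condition~3 asks only that the \emph{restricted} link ${\rm Lk}_L(v)|_{\tau_v}$ be non-contractible, which here is $\{\{1\},\{2\}\}$. So the maximality-of-$|\sigma|$ strategy can be salvaged, but the push-up must aim at producing a larger local obstruction directly (restricted link), not a larger $\sigma'$ with non-contractible full link; and you must still argue that \emph{some} $v$ works, since not every $v\in V(L)\setminus\tau$ does (in the variant with facets $\{1,3,4\},\{2,3\}$, vertex $4$ fails while $3$ succeeds). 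That existence argument is the real content of the converse, and it is not supplied in your proposal.
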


Two observations follow immediately from Proposition~\ref{prop:loc-obs}.  First, each simplicial complex  defines a set of {\em mandatory} codewords, those nonempty intersections of facets for which the link is non-contractible:
\begin{definition} \label{def:mandatory}
A face $\sigma$ of a simplicial complex $\Delta$ is a {\em mandatory codeword} of $\Delta$ if it is the nonempty intersection of a set of facets of $\Delta$ such that ${\rm Lk}_{\Delta}(\sigma)$ is non-contractible.  
\end{definition}
\noindent
Proposition~\ref{prop:loc-obs} then states that
{\em a code $\CC$ has no local obstructions if and only if it contains all of the  mandatory codewords of $\Delta(\mathcal C)$.}

The second observation is that Proposition~\ref{prop:loc-obs} gives a method for determining which codes on $n$ neurons have local obstructions.  Namely, first enumerate all simplicial complexes on $n$ vertices, and then for each simplicial complex, determine the set of mandatory codewords.  Curto {\em et al.} completed this analysis for $n \leq 4$ (and additionally proved that all such codes without local obstructions are convex)~\cite{what-makes}, and we will complete the $n=5$ case in Section~\ref{sec:enum}.

\begin{example}
We revisit the non-convex three-neuron code $\CC= \{\mathbf{12,13}, \emptyset\}$ from Example~\ref{ex:non-convex}.
Its simplicial complex is $\Delta(\CC)=\{\mathbf{12,13},1,2,3, \emptyset\}$, a path of length 2. Thus, the codeword $1$ is an intersection of maximal codewords. However, ${\rm Lk}_{\Delta(\CC)}(1)=\{\mathbf{2,3}, \emptyset\}$, which is a simplicial complex consisting of two disconnected points. Thus, 1 is a mandatory codeword which is not in $\CC$, so $\CC$ has a local obstruction (by Proposition~\ref{prop:loc-obs}).
\end{example}

The following result summarizes prior results about the relationships among properties of codes; part~1 is due to Cruz~{\em et al.}~\cite{intersection-complete}, part~2 follows from the fact that every convex open cover is a good cover, part~3 is due to Giusti and Itskov~\cite{no-go} (see also~\cite[Lemma 1.5]{what-makes}), and part~4 follows immediately from Proposition~\ref{prop:loc-obs}.

\begin{proposition}	\label{prop:prior-results}
Let $\mathcal{C}$ be a neural code.
	\begin{enumerate}
	\item If $\CC$ is intersection-complete, then $\CC$ is convex.
	\item If $\CC$ is convex, then $\CC$ is a good-cover code.
	\item If $\CC$ is a good-cover code, then $\CC$ has no local obstructions. 
	\item If $\CC$ is max-intersection-complete, then $\CC$ has no local obstructions. 
	\end{enumerate}
\end{proposition}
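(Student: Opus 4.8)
The statement bundles four implications of very different depths, so the plan is to dispatch them one at a time; only Part~1 carries real content, and it is the one I would quote rather than reprove. For Part~1 (intersection-complete $\Rightarrow$ convex) I would invoke the convex realization of~\cite{intersection-complete}: given an intersection-complete code $\CC$, enumerate its codewords $\sigma_1,\dots,\sigma_k$, place a point $p_j \in \R^d$ in sufficiently general position for each $\sigma_j$, and let $U_i$ be a small open neighborhood of $\mathrm{conv}\{p_j \mid i \in \sigma_j\}$. The crux is that intersection-completeness forces the regions cut out by $\{U_i\}$ to be exactly the codewords of $\CC$: if a point lies in $\bigcap_{i \in \sigma} U_i$ and in no other $U_j$, then $\sigma$ is an intersection of the codewords whose convex hulls contain that point, hence itself a codeword. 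Pinning down ``general position'', the radii of the neighborhoods, and the absence of spurious atoms is the technical heart, and since it is already carried out in~\cite{intersection-complete}, I would cite it.

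Parts~2 and~4 are then immediate. For Part~2 (convex $\Rightarrow$ good-cover): a convex open set is contractible, and a finite intersection of convex sets is again convex, hence contractible or empty, so any convex realization of $\CC$ is in particular a good cover. For Part~4 (max-intersection-complete $\Rightarrow$ no local obstructions): by Proposition~\ref{prop:loc-obs} a code has no local obstruction if and only if it contains all its mandatory codewords, and these are by definition nonempty intersections of facets of $\Delta(\CC)$, equivalently nonempty intersections of maximal codewords; a max-intersection-complete code contains all such intersections, hence all mandatory codewords, hence has no local obstruction.

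The remaining implication, Part~3 (good-cover $\Rightarrow$ no local obstructions), I would prove by contradiction using the Nerve Lemma, in the spirit of~\cite{no-go}. Suppose a good-cover realization $\mathcal U$ of $\CC$ witnesses a local obstruction via disjoint nonempty $\sigma, \tau \subseteq [n]$. Set $W = \bigcap_{i \in \sigma} U_i$ and $V_j = W \cap U_j$ for $j \in \tau$. Condition~1 of Definition~\ref{def:local-obs} gives $V_j \neq \emptyset$, so $W$ is a nonempty intersection of good-cover sets and is contractible; condition~2 says the $V_j$ cover $W$; and every finite intersection of the $V_j$ is an intersection of members of $\mathcal U$, hence contractible or empty. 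Thus $\{V_j\}_{j \in \tau}$ is a good cover of the contractible space $W$, and the Nerve Lemma makes its nerve contractible. It then remains to identify this nerve with ${\rm Lk}_{\Delta|_{\sigma \cup \tau}}(\sigma)$: a subset $S \subseteq \tau$ is a face of the nerve iff $\bigcap_{i \in \sigma \cup S} U_i \neq \emptyset$, and since a nonempty intersection $\bigcap_{i \in T} U_i$ is always witnessed by a point lying in the atom of some codeword containing $T$ (and conversely a face of $\Delta(\CC)$ sits inside a codeword whose atom lies in the corresponding intersection), this happens iff $\sigma \cup S \in \Delta(\CC)$, which is exactly the condition for $S$ to be a face of ${\rm Lk}_{\Delta|_{\sigma \cup \tau}}(\sigma)$. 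This contradicts condition~3, so no local obstruction exists.

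The only genuine obstacle is Part~1, and in this paper it is bypassed by the citation to~\cite{intersection-complete}; the sole piece of nontrivial bookkeeping we would have to carry out ourselves is the nerve-to-link identification in Part~3. It is also worth recording that Part~4's conclusion already follows from Part~1 combined with the trivial implication intersection-complete $\Rightarrow$ max-intersection-complete, so the four properties line up in a nested chain, with only the first link requiring any geometry.
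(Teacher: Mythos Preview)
Your approach matches the paper's exactly: it too dispatches Part~1 by citing~\cite{intersection-complete}, Part~2 by the observation that convex open covers are good covers, Part~3 by citing~\cite{no-go}, and Part~4 by invoking Proposition~\ref{prop:loc-obs}; your added sketches (the Nerve-Lemma argument for Part~3 especially) are correct and simply unpack what the paper leaves to the references.

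One slip in your closing paragraph, though: Part~4 does \emph{not} follow from Part~1 via ``intersection-complete $\Rightarrow$ max-intersection-complete''. That implication points the wrong way---max-intersection-complete is the \emph{weaker} hypothesis, so knowing something about intersection-complete codes tells you nothing about the broader class of max-intersection-complete ones. The four properties do not form a single nested chain; Parts~1--3 give one chain (intersection-complete $\Rightarrow$ convex $\Rightarrow$ good-cover $\Rightarrow$ no local obstructions), and Part~4 is a separate entry point into ``no local obstructions''.
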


\noindent
Curto~{\em et al.} showed that for all neural codes on up to four  neurons, the converses of parts 2--4 of Proposition~\ref{prop:prior-results} hold~\cite[\S 4]{what-makes}:
\begin{proposition}	\label{prop:at-most-4}  
Let $\CC$ be a neural code on at most four neurons. The following are equivalent:
	\begin{enumerate}
	\item $\CC$ is convex.
    \item $\CC$ is a good-cover code.
    \item $\CC$ has no local obstructions.
	\item $\CC$ is max-intersection-complete.
	\end{enumerate}
\end{proposition}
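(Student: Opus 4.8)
The plan is to establish the cycle of implications $1 \Rightarrow 2 \Rightarrow 3 \Rightarrow 4 \Rightarrow 1$ among the four listed properties. The implications $1 \Rightarrow 2$, $2 \Rightarrow 3$, and $4 \Rightarrow 3$ are immediate from parts~2,~3, and~4 of Proposition~\ref{prop:prior-results} and hold for codes on any number of neurons, so the task reduces to proving, for codes on at most $4$ neurons, the two implications $3 \Rightarrow 4$ and $4 \Rightarrow 1$, which I would treat separately.

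For $3 \Rightarrow 4$, let $\CC$ have no local obstructions and write $\Delta = \Delta(\CC)$. By the mandatory-codeword reformulation of Proposition~\ref{prop:loc-obs} (see Definition~\ref{def:mandatory}), $\CC$ contains every mandatory codeword of $\Delta$, so it suffices to show that when $n \le 4$ every nonempty intersection $\sigma$ of facets of $\Delta$ is either a facet itself (hence a maximal codeword of $\CC$, already present) or mandatory. So assume $\sigma = M_1 \cap \cdots \cap M_k$ with the $M_i$ distinct facets, $k \ge 2$, and $\sigma \subsetneq M_i$ for each $i$. No facet equals $[4]$ (that would force it to be the unique facet, contradicting $k \ge 2$), so every facet has at most $3$ elements and $|\sigma| \le 2$. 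A direct computation identifies ${\rm Lk}_{\Delta}(\sigma)$ with the union of the full simplices on the vertex sets $F \setminus \sigma$, taken over all facets $F \supseteq \sigma$; these vertex sets are nonempty, have empty common intersection (indeed $\bigcap_i (M_i \setminus \sigma) = \emptyset$), and have at most $3 - |\sigma| \le 2$ elements each. Hence ${\rm Lk}_{\Delta}(\sigma)$ is a graph on at most $3$ vertices having two or more facets with no vertex common to all of them; the only tree on at most $3$ vertices with two or more facets is the path on $3$ vertices, whose two edges share a vertex, so ${\rm Lk}_{\Delta}(\sigma)$ is not a tree and therefore not contractible. Thus $\sigma$ is mandatory and $\CC$ is max-intersection-complete. (This is exactly where the argument breaks for $n = 5$: the link can then be a contractible $2$-dimensional complex with several facets, which is the mechanism behind Theorem~\ref{thm:tree}.)

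For $4 \Rightarrow 1$, suppose $\CC$ is max-intersection-complete and set $\Delta = \Delta(\CC)$. Since the facets of $\Delta$ are precisely the maximal codewords of $\CC$, the code $\CC$ is squeezed between two convex codes: writing $\CC_0$ for the set of all intersections of facets of $\Delta$ (together with $\emptyset$), one has $\CC_0 \subseteq \CC \subseteq \Delta$, and both $\CC_0$ and the full complex $\Delta$, viewed as codes, are intersection-complete and hence convex by Proposition~\ref{prop:prior-results}(1). It remains to promote this to a convex realization of $\CC$ itself. I would carry this out by a finite case analysis: enumerate the simplicial complexes $\Delta$ on at most $4$ vertices up to isomorphism (a short list), and for each such $\Delta$ and each code $\CC$ with $\CC_0 \subseteq \CC \subseteq \Delta$ exhibit convex open sets in $\R^2$ realizing $\CC$, typically by modifying a realization of $\CC_0$ so as to carve out or fill in the required atoms.

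I expect the bookkeeping in $3 \Rightarrow 4$ to be routine once the description of the link above is pinned down, so the main obstacle should be $4 \Rightarrow 1$: Proposition~\ref{prop:prior-results}(1) applies only to intersection-complete codes, but there are codes on $4$ neurons that are max-intersection-complete yet not intersection-complete (for example $\{\mathbf{123},\mathbf{234},23,13\}$, whose codeword set omits $13 \cap 23 = 3$), and each such code must be realized by hand. Cutting down the number of cases by quotienting by the $S_n$-action that relabels neurons, and by reusing a realization of one code to build a realization of a combinatorially adjacent code, is what makes this step manageable; alternatively, one could appeal directly to the stronger statement that every max-intersection-complete code is convex, should that be available.
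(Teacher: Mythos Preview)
The paper does not supply its own proof of this proposition; it is quoted from Curto {\em et al.}~\cite[\S 4]{what-makes}, where $4 \Rightarrow 1$ is obtained by a direct case-by-case enumeration of simplicial complexes on at most four vertices together with explicit convex realizations. Your plan for $4 \Rightarrow 1$ is precisely this enumeration, so on that step you match the cited source, and you correctly flag it as the place where the real work lies.

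Your argument for $3 \Rightarrow 4$ is correct and is in fact the $n \le 4$ instance of the reasoning this paper uses later to prove Theorem~\ref{thm:classify-5}: the link ${\rm Lk}_{\Delta}(\sigma)$ of a nonempty intersection of facets is a non-cone (its facets have empty common intersection---this is Lemma~\ref{lem:cone}) on at most $n-1$ vertices, and one then asks which contractible non-cones live on that many vertices. For $n \le 4$ the link sits on at most three vertices and, as you verify by hand, there is no contractible non-cone graph on three or fewer vertices, so every such $\sigma$ is mandatory. One small inaccuracy in your parenthetical about $n = 5$: the mechanism is not that the link becomes $2$-dimensional, but that on four vertices there \emph{is} a contractible non-cone graph, the path $P_3$ of length~$3$ (see Example~\ref{ex:counterexample} and the proof of Theorem~\ref{thm:classify-5}).
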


A preliminary version of~\cite{what-makes} conjectured that the equivalence of parts~1-3 in Proposition~\ref{prop:at-most-4} generalizes to codes on more than four neurons.  In the next section, we give the first counterexample to that conjecture, which shows that parts 1 and 2 are not equivalent (Theorem~\ref{thm:counterex}).  Furthermore, our counterexample uses only five neurons.  It is still unknown whether parts 2 and 3 are equivalent. 

Part~4 of Proposition~\ref{prop:prior-results} also can not be generalized to codes with more than four neurons~\cite{what-makes}.
Section~\ref{sec:tree} focuses on this gap: the intersections of facets that are {\em not} mandatory.  More precisely, we identify a criterion that guarantees that a max-intersection-incomplete code has no local obstructions, and we prove that our criterion classifies all such codes on up to five neurons.

\begin{example} \label{ex:counterexample}
Consider the following code: $$\CC = \{{\bf 2345,~123,~134, ~145},~13,~14,~23,~34,~ 45,~ 3,~ 4 ,~\emptyset\}~.$$ 
Note that $\CC$ is invariant under the permutation $(2,5)(3,4)$.  

The nonempty intersections of maximal codewords are $13$, $14$, $23$, $34$, $45$, $1$, $3$, and $4$. Of these, only $1$ is missing from the code. We find that 
$${\rm Lk}_{\Delta(\CC)}(1)=\{\mathbf{23,34,45},2,3,4,5, \emptyset\}~, $$ 
which is a path of length 3 (thus, contractible). Hence, $\CC$ has no local obstructions (Proposition~\ref{prop:loc-obs}), and, moreover, is a max-intersection-{\em incomplete} code with no local obstructions.   
Also, it is the minimal code (with respect to inclusion) among all codes with no local obstructions with the same simplicial complex (in the notation introduced later, $\mathcal{C} = \CC_{\rm min} (\Delta(\mathcal{C})) $ -- see Section \ref{sec:path}); this is because the links with respect to all other maximal codewords are non-contractible.  

In the next section, we show that despite having no local obstructions, $\CC$ is not convex. 
\end{example}

\section{A non-convex code with no local obstructions} \label{sec:counterex}

Here we show that the code in Example~\ref{ex:counterexample} is non-convex despite having no local obstructions (recall from Proposition~\ref{prop:prior-results} that good-cover codes have no local obstructions):
	\begin{theorem}	\label{thm:counterex}
	The following code on $5$ neurons:
	\begin{align} \label{eq:counterex-code}
	\mathcal{C} \quad = \quad 
		\left\{ {\bf 2345,~123,~134, ~145},~13,~14,~23,~34,~ 45,~ 3,~ 4,~ \emptyset \right\}
	\end{align}
	is a non-convex, good-cover code. 
	\end{theorem}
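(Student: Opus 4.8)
The plan is to prove the theorem in two separate parts, mirroring the two assertions in its statement: first that $\mathcal{C}$ is a good-cover code, and second that $\mathcal{C}$ is \emph{not} convex.

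\medskip

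\noindent\textbf{The good-cover direction.} Here I would construct an explicit good cover $\mathcal{U} = \{U_1, \ldots, U_5\}$ of a set $X \subseteq \mathbb{R}^d$ realizing $\mathcal{C}$; taking $d = 2$ should suffice, and a picture would accompany the construction. The natural strategy is to start from the simplicial complex $\Delta(\mathcal{C})$, whose facets are $2345$, $123$, $134$, and $145$, and lay out a planar region that looks like a ``book'' of triangles $123$, $134$, $145$ glued along the edge $14$ (or rather, built around the common vertices $1$, $3$, $4$), together with a thickened tetrahedron-shaped region for $2345$. I would then carve out the regions corresponding to the codewords that are \emph{not} in $\mathcal{C}$ (most importantly, the codeword $1$ must be absent, as must $2$, $5$, $12$, $15$, $24$, $25$, $35$, $124$, etc.), making sure each intersection $\bigcap_{i \in \sigma} U_i$ is contractible (an open disk-like region) or empty, exactly according to whether $\sigma \in \Delta(\mathcal{C})$, and that the atoms $\left(\bigcap_{i\in\sigma}U_i\right)\setminus\bigcup_{j\notin\sigma}U_j$ are nonempty precisely for $\sigma \in \mathcal{C}$. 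The key constraint to respect is that $\bigcap_{i \in 234} U_i = U_2 \cap U_3 \cap U_4$ must be nonempty (since $2345 \in \mathcal{C}$ forces $234 \in \Delta(\mathcal{C})$) yet must be contained in $U_5$ because $234 \notin \mathcal{C}$ and indeed no codeword of $\mathcal{C}$ contains $234$ except $2345$; similarly the region $U_1 \cap U_3 \cap U_4$ realizing $134 \in \mathcal{C}$ must contain a point outside every other $U_i$. Verifying contractibility of all the intersections is routine once the picture is drawn correctly, so this direction is mostly a careful bookkeeping exercise.

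\medskip

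\noindent\textbf{The non-convexity direction.} This is the heart of the theorem and the main obstacle. I would argue by contradiction: suppose convex open sets $U_1, \ldots, U_5 \subseteq \mathbb{R}^d$ realize $\mathcal{C}$. The key is to exploit the fact that the codeword $1$ is missing even though $13$, $14 \in \mathcal{C}$ and the link of $1$ is contractible (so no local-obstruction argument can work — we genuinely need convexity). Consider the set $U_1$. Since $123, 134, 145 \in \mathcal{C}$ but $1$, $12$, $15$, $124$, $125$, $135$, $1234$, etc.\ are \emph{not} in $\mathcal{C}$, the region $U_1$ is covered by $U_2, U_3, U_4, U_5$, and more precisely $U_1 \subseteq U_3 \cup U_4$ (since every codeword containing $1$ also contains $3$ or $4$: indeed $123, 134, 145$ all contain $3$ or $4$), while also $U_1 \cap U_2$, $U_1 \cap U_5$, $U_1 \cap U_3 \cap U_4$ are nonempty and carry further constraints — e.g. $U_1 \cap U_2 \subseteq U_3$ (as $12$, $124$, $125 \notin \mathcal{C}$ but $123 \in \mathcal{C}$), and $U_1 \cap U_5 \subseteq U_4$ (as $15, 125, 135 \notin \mathcal{C}$ but $145 \in \mathcal{C}$). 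Now pick points $p \in U_1 \cap U_2 \setminus (U_4 \cup U_5)$ (this is the atom of codeword $123$, using also $p \in U_3$) and $q \in U_1 \cap U_5 \setminus (U_2 \cup U_3)$ (the atom of $145$, with $q \in U_4$). The segment $\overline{pq}$ lies in $U_1$ by convexity; since $U_1 \subseteq U_3 \cup U_4$, the segment is covered by the two open sets $U_1 \cap U_3$ and $U_1 \cap U_4$. But $p \in U_1 \cap U_3$, $p \notin U_1 \cap U_4$, while $q \in U_1 \cap U_4$, $q \notin U_1 \cap U_3$; so neither covers the whole segment, hence they must overlap somewhere on $\overline{pq}$, giving a point $x \in U_1 \cap U_3 \cap U_4$. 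One then traces which codeword contains such an $x$: $134 \in \mathcal{C}$ is the intersection of $\Delta|$-facets, so this alone is not yet a contradiction. The refinement I would push is to also involve $U_2$ and $U_5$ and the convexity of $U_3$ (resp.\ $U_4$) to force a point in an atom that is required to be empty — for instance, showing that along $\overline{pq}$ the set $U_3$ must contain a subinterval meeting $U_4$ but also, by convexity of $U_3$ applied to $p \in U_3$ and a point of $U_3$ near $q$'s side, that this subinterval's structure is incompatible with $135 \notin \mathcal{C}$ and $234 \notin \mathcal{C}$ simultaneously; the precise contradiction is the delicate combinatorial-geometric step, and identifying the exactly right pair of witness points and the right convex set to apply the segment argument to is where the real work lies.

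\medskip

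\noindent In summary: the good-cover half is a drawing-and-checking exercise, and the non-convex half hinges on a ``connectedness of a segment inside a convex $U_i$'' argument — essentially a quantitative, convexity-sensitive refinement of the Example~\ref{ex:non-convex} idea — applied not to $U_1$ alone but iteratively, with the main obstacle being the choice of witness points and convex sets that drive the intersection pattern into a codeword that $\mathcal{C}$ forbids.
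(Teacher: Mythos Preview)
Your good-cover half is exactly what the paper does: an explicit planar picture, then a routine check. No issue there.

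The non-convexity half has a genuine gap. You correctly get as far as the paper's first step: pick $p\in U_{123}$, $q\in U_{145}$, note $U_1\subseteq U_3\cup U_4$, and conclude by connectedness of the segment $\overline{pq}\subseteq U_1$ that some interior point $x$ lies in $U_1\cap U_3\cap U_4$. You are also right that this alone is no contradiction, since $134\in\mathcal{C}$. But your proposed continuation---iterating segment arguments inside $U_3$ or $U_4$ hoping to land in a forbidden atom like $135$ or $234$---does not close. Along $\overline{pq}$ the codeword sequence $123,\,13,\,134,\,14,\,145$ is perfectly consistent with $\mathcal{C}$, and further segments drawn between points you already control stay inside allowed atoms for the same reason. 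You yourself flag that ``identifying the exactly right pair of witness points and the right convex set\dots\ is where the real work lies''; that work is not done here, and the sketch as written does not suggest a mechanism that would force a forbidden codeword.

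The missing idea is this. Set $W_1=U_2\cap U_3$, $W_2=U_3\cap U_4$, $W_3=U_4\cap U_5$. Because the only codeword of $\mathcal{C}$ containing $234$ is $2345$, and likewise for $345$, one has $W_1\cap W_2=W_2\cap W_3=W_1\cap W_3=U_{2345}$, which is nonempty. The paper isolates a separate geometric lemma: if three convex open sets have all pairwise intersections equal to their (nonempty) triple intersection, then any line meeting all three must meet the triple intersection. The line through $p,x,q$ meets $W_1,W_2,W_3$ (at $p,x,q$ respectively) but cannot meet $U_{2345}$: on the segment it lies in $U_1$ (and $12345\notin\Delta(\mathcal C)$), and off the segment a point of $U_{2345}$ would, by convexity of $U_4$ or $U_5$, drag $p$ or $q$ into a set it cannot belong to. That is the contradiction. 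The crucial move is passing from the $U_i$'s to the auxiliary convex sets $W_i$ whose pairwise-equals-triple structure encodes the absence of $234$ and $345$ from $\mathcal{C}$; without that reframing, the segment heuristic stalls.
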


The proof of Theorem~\ref{thm:counterex} requires the following lemma, which has proven useful in other contexts as well (for instance, in distinguishing between minimal embedding dimension 2 vs.\ 3 among 
convex codes arising from the simplicial complex labeled by L24 in~\cite{what-makes}). 

\begin{lemma} \label{lem:cvx-sets-R2}
Let $W_1$, $W_2$, and $W_3$ be convex open sets in $\mathbb{R}^n$ such that their intersection is nonempty and is equal to all pairwise intersections: $W_1 \cap W_2 \cap W_3 = W_i \cap W_j$ for all $1 \leq i < j \leq 3$.  Then, any line that intersects each of the $W_i$'s must intersect $W_1 \cap W_2 \cap W_3$.
\end{lemma}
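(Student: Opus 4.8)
The plan is to argue by contradiction: suppose $\ell$ is a line meeting each $W_i$ but missing $T := W_1 \cap W_2 \cap W_3$. Since $T$ is open and convex and $\ell$ is disjoint from it, we can strictly separate them: there is a closed affine half-space $H^+$ with $T \subseteq H^+$ (in fact in its interior) and $\ell$ contained in the complementary open half-space, or $\ell$ lies in the boundary hyperplane. Actually the cleanest version: pick a point $x_i \in W_i \cap \ell$ for each $i$; after relabeling we may assume $x_2$ lies between $x_1$ and $x_3$ on $\ell$ (one of the three points is between the other two). Then I would like to derive that $x_2 \in T$, contradicting $\ell \cap T = \emptyset$.

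To see $x_2 \in T$, the key observation is that the hypothesis $W_i \cap W_j = T$ for all $i \neq j$ is very restrictive. First note $x_2 \in W_2$ by construction; I must show $x_2 \in W_1$ and $x_2 \in W_3$. Consider $W_1$: it contains $x_1$. Consider $W_3$: it contains $x_3$. Now take any point $p \in T$ (nonempty by hypothesis). The triangle with vertices $x_1, x_3, p$ — I want to locate $x_2$ relative to convex hulls. Here is the cleaner route: since $x_1 \in W_1$ and $p \in T \subseteq W_3$, and $W_3$ convex, the segment $[x_1,p]$... no, $x_1 \notin W_3$ necessarily. Let me instead use: the segment $[x_1, p] \subseteq W_1$ (both endpoints in $W_1$, convex), and similarly $[x_3,p]\subseteq W_3$. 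The two segments $[x_1,p]$ and $[x_3,p]$ together with the segment $[x_1,x_3]\ni x_2$ bound a triangle. I'd consider the segment from $x_2$ to $p$: does it cross something useful? Alternatively, and I think this is the real argument: the set $W_1$ contains $x_1$ and contains $T$; likewise $W_3$ contains $x_3$ and $T$. The segment $[x_1,x_3]$ passes through $x_2$. Projecting $p$: the triangle $x_1 x_3 p$ has $[x_1,p]\subseteq W_1$, $[x_3,p]\subseteq W_3$. The point $x_2$ on edge $[x_1,x_3]$ can be written as a convex combination involving a point on $[x_1,p]$, a point on $[x_3,p]$... more precisely, the segment $[x_2,p]$ lies inside the triangle and hence, for points near $x_2$, we need membership. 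I would show $x_2 \in W_1$: the segment $[x_1, p] \subseteq W_1$; consider the "fan" of segments from $x_1$; hmm, $x_2$ is not on $[x_1,p]$.

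Let me restart the mechanism cleanly. Since $x_2$ is between $x_1$ and $x_3$, write $x_2 = (1-t)x_1 + t x_3$ for some $t\in[0,1]$. Pick $p\in T$. Define $q := (1-t)p + t x_3$. Then $q \in W_3$ since $p\in T\subseteq W_3$, $x_3\in W_3$, convex. Now $x_2 = (1-t) x_1 + t x_3$ and consider the point $r := (1-s)x_2 + s\,(\text{something})$... I think the right statement is: $W_1 \supseteq [x_1,p]$ and $W_3 \supseteq [x_3,p]$, so the triangle's two sides lie in $W_1$ and $W_3$ respectively. The segment $[x_2, p]$ starts on side $[x_1,x_3]$ and ends at the apex $p$, so it lies in the triangle. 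Every point of $[x_2,p]$ except possibly $x_2$ itself can be hit: take $y$ on $[x_2,p]$ close to $p$; then $y \in W_1 \cap W_3$? Points near $p$ lie in the open set $T$. So there is a point $y \in [x_2,p] \cap T$, and indeed the whole segment $[y,p]\subseteq T$ by convexity of $T$. Now I have $y\in W_1$ and $x_1 \in W_1$; the segment $[x_1,y]\subseteq W_1$. Similarly $[x_3,y]\subseteq W_3$. The point $x_2$ lies on $[x_1,x_3]$; does it lie on $[x_1,y]\cup[x_3,y]$? Geometrically, $y$ is inside the triangle on the cevian from $x_2$, so the segments $[x_1,y]$ and $[x_3,y]$ together with $[x_1,x_3]$ bound the sub-triangle $x_1 x_3 y$, and $x_2\in[x_1,x_3]$ is a vertex of it — not in the interior. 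So this still doesn't immediately place $x_2$.

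The honest fix, and the step I expect to be the main obstacle, is handling the boundary case carefully: I will instead show directly that $x_2 \in \overline{W_1}\cap\overline{W_3}$ and then upgrade to membership in the open sets using that $W_1,W_3$ are open and $x_2$ is a limit of a segment going into $T$. Concretely: for the segment $[x_2,p]$, all interior points sufficiently close to $x_2$ lie in the triangle, and I claim they lie in $W_1\cap W_3$. To get $W_1$: such a point $z$ is a convex combination of $x_2\in[x_1,x_3]$ and $p$; but $x_2$ itself is a combination of $x_1$ and $x_3$, so $z$ is a combination of $x_1, x_3, p$ with positive weight on $p$ and on at least one of $x_1,x_3$. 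That is not obviously enough for $W_1$. So in fact the cleanest correct argument is the separation one: if $x_2\notin W_1$, then since $W_1$ is open convex, separate $x_2$ from $W_1$ by a hyperplane $H$, so $W_1$ lies strictly on one side, say $H^{>0}$, and $x_2\in H^{\le 0}$. But $T\subseteq W_1\subseteq H^{>0}$, so $p\in H^{>0}$. Also $x_1\in W_1\subseteq H^{>0}$. Now $x_3$: we have $x_3\in W_3$, and $W_3$ need not be in $H^{>0}$... but $T=W_2\cap W_3 \subseteq H^{>0}$. Hmm, $x_3$ could be on the other side. Then $x_2\in[x_1,x_3]$ with $x_1\in H^{>0}$: that forces $x_3\in H^{<0}$ strictly (since $x_2\in H^{\le0}$ and $x_1\in H^{>0}$). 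So $x_3 \in W_3 \setminus \overline{W_1}$. Now here's the contradiction I'll engineer: along the line $\ell$, we have $x_1\in H^{>0}$, $x_3\in H^{<0}$, and $x_2$ between them with $x_2 \in H^{\le 0}$; fine, consistent. I need another relation. Use $W_2$: $x_2\in W_2$, and $T = W_1\cap W_2$, so $W_1\cap W_2\subseteq H^{>0}$. Consider the segment $[x_1,x_2]\subseteq$? $x_1\in W_1$ but maybe not $W_2$. Ugh.

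Given the subtlety, my final plan is: (1) reduce to three collinear points $x_1,x_2,x_3$ with $x_2$ between the others; (2) fix $p\in T=W_1\cap W_2\cap W_3$; (3) show the segment $[x_1,x_3]$ is covered by $([x_1,x_3]\cap W_1)\cup([x_1,x_3]\cap W_3)$ together with the claim that these two relatively open subsegments of the connected segment $[x_1,x_3]$ must overlap (since a connected set cannot be a disjoint union of two nonempty relatively open sets), and any point in the overlap lies in $W_1\cap W_3=T$, giving a point of $\ell\cap T$ — contradiction. For step (3), I'll show $[x_1,x_3]\cap W_1$ is relatively open (clear, $W_1$ open) and nonempty ($x_1$), similarly for $W_3$ ($x_3$); if they were disjoint, then since $[x_1,x_3]$ is connected their union would be a proper clopen-in-complement... the gap $[x_1,x_3]\setminus(W_1\cup W_3)$ would be nonempty and closed. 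To rule this out I use $p$: the triangle $x_1x_3p$ has edges $[x_1,p]\subseteq W_1$ and $[x_3,p]\subseteq W_3$; for a point $m$ in the alleged gap on $[x_1,x_3]$, the segment $[m,p]$ starts outside $W_1\cup W_3$ and ends in $T$, and by a continuity/connectedness argument along this segment combined with the fact that it lies in the triangle whose two sides are in $W_1$ and $W_3$, I derive that $m$ after all lies in $\overline{W_1}\cap\overline{W_3}$, and then openness of $W_1,W_3$ plus the $2$-dimensionality of the triangle forces points of $[x_1,x_3]$ near $m$ into $W_1\cap W_3$. The main obstacle is exactly this last continuity argument — making rigorous that the two convex sets, pinned together along all their pairwise intersections, cannot "pinch off" a point of the segment — and I'd handle it by working inside the $2$-plane spanned by $\ell$ and $p$, where everything becomes a planar convexity statement amenable to a direct separating-line argument.
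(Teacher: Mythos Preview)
Your setup matches the paper's: pick $x_i\in W_i\cap\ell$ with $x_2$ between $x_1$ and $x_3$, fix $p\in T$, and work in the $2$-plane spanned by $\ell$ and $p$. The divergence, and the genuine gap, is in your ``final plan'' step (3).

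The claim you want there --- that the relatively open intervals $[x_1,x_3]\cap W_1$ and $[x_1,x_3]\cap W_3$ must overlap --- cannot be proved using only $W_1$, $W_3$, and $p\in T$; the role of $W_2$ has dropped out of your argument, and without it the claim is simply false. Concretely, in $\mathbb{R}^2$ take $W_1=\{y>x-1\}$, $W_3=\{y>-x-1\}$, so $T=W_1\cap W_3=\{y>|x|-1\}$ is open, convex, and nonempty. On the line $\ell=\{y=-2\}$ the sets $\ell\cap W_1$ and $\ell\cap W_3$ are disjoint half-lines with a gap $[-1,1]\times\{-2\}$ between them, and no point of that gap lies in $\overline{W_1}\cap\overline{W_3}$. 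So neither your covering claim nor the fallback assertion ``$m\in\overline{W_1}\cap\overline{W_3}$'' can be derived from the data you have retained. (In this example one can take $W_2=T$, and then $\ell\cap W_2=\emptyset$, so there is no $x_2$ --- which is exactly the point: the existence of $x_2\in W_2$ between $x_1$ and $x_3$ is the missing ingredient.)

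The paper's proof supplies precisely this. Since $x_2\in W_2$ and $p\in T\subseteq W_2$, the entire segment $[x_2,p]$ lies in $W_2$ by convexity. Let $q$ be the last point of $[x_2,p]$ (moving from $x_2$ toward $p$) not in the open set $T$; then $q\in W_2$. The interiors of the triangles $\triangle x_1\,p\,q$ and $\triangle x_3\,p\,q$ lie in $W_1$ and $W_3$, respectively (each interior point is on a segment from $x_1$, resp.\ $x_3$, to a point of the open segment $(q,p]\subseteq T$). Now draw the line through $q$ parallel to $\ell$: just to one side of $q$ it lies in the interior of the first triangle, hence in $W_1$, and just to the other side in $W_3$. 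Intersecting with a small neighborhood of $q$ contained in $W_2$ produces points $l\in W_1\cap W_2=T$ and $r\in W_3\cap W_2=T$ with $q\in[l,r]$, whence $q\in T$ by convexity of $T$ --- contradicting the choice of $q$. Your sketch was circling this idea (the segment $[x_2,p]$, the two triangles) but never invoked $x_2\in W_2$ at the decisive moment; that is the step to restore.
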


\begin{proof}
Let $W_{123}:=W_1 \cap W_2 \cap W_3$.  
Assume for contradiction that there exists a line $L$ that intersects each $W_i \setminus W_{123}$ (for $i=1,2,3$) but {\em not} $W_{123}$.  For $i=1,2,3$, let $p_i$ be such an intersection point on $W_i$, i.e.\ $p_i \in L \cap \left( W_i \setminus W_{123} \right)$.  By relabeling if necessary, we may assume that $p_2$ lies between $p_1$ and $p_3$ on the line $L$.  Now let $p_{123} \in W_{123}$; then $p_{123}$ is {\em not} on $L$ by hypothesis.  So, our points have a triangular structure: 
\begin{center}
\includegraphics[scale=0.26]{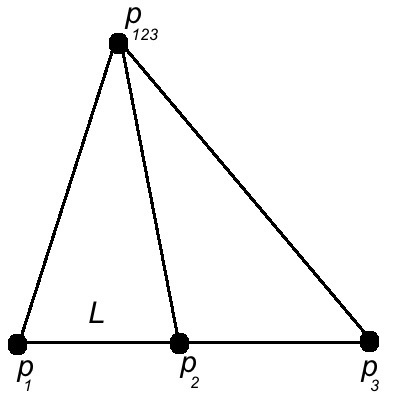}
\end{center}

Removing the convex open set $W_{123}$ from the line segment $\overline{p_2 p_{123}}$ yields a shorter, closed line segment (nonempty because it contains $p_2$), namely, the line segment $\overline{p_2 q} := \overline{p_2 p_{123}} \setminus W_{123}$ for some point $q$.  

We claim that the interior of the triangle $\bigtriangleup p_1  p_{123} q$ is contained in $W_1$.  Indeed, any point in the interior resides on a line segment between $p_1$ and a point (``above $q$'') on the half-open line segment $\overline{p_{123} q} \setminus \{q\}$, both of which are in the convex set $W_1$.  Similarly, the interior of $\bigtriangleup p_3  p_{123} q$ is contained in $W_3$.  

Let $L'$ be the line parallel to $L$ that passes through $q$.  Note that on one side of $q$ (the left side in the figure), $L'$ passes through the triangle $\bigtriangleup p_1  p_{123} q$, and on the other (right) side, $L'$ passes through $\bigtriangleup p_3  p_{123} q$.  Now $q$ is in the open set $W_2$, so there is an open neighborhood $N$ of $q$ that is entirely in $W_2$.  Intersect $N$ with $L'$ and pick from the ``left'' side a point $l \in N \cap L'$ which is therefore in $W_1 \cap W_2 = W_{123}$.  Similarly, pick a ``right'' point $r \in N \cap L'$ which is in $W_{123}$.  
\begin{center}
\includegraphics[scale=0.26]{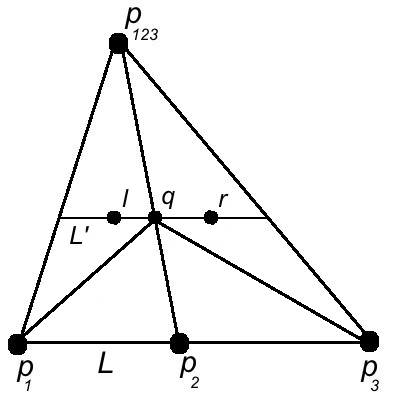}
\end{center}
Now, $q$ is on the line segment $\overline{lr}$ by construction, so $q \in W_{123}$ by convexity, which contradicts the construction of $q$.  Thus we are done.
\end{proof}

\begin{proof}[Proof of Theorem~\ref{thm:counterex}]
Figure~\ref{fig:good-cover} demonstrates that the code~\eqref{eq:counterex-code} is a good-cover code. 
 \begin{figure}[ht]
\begin{center}
\includegraphics[width = 1.7in]{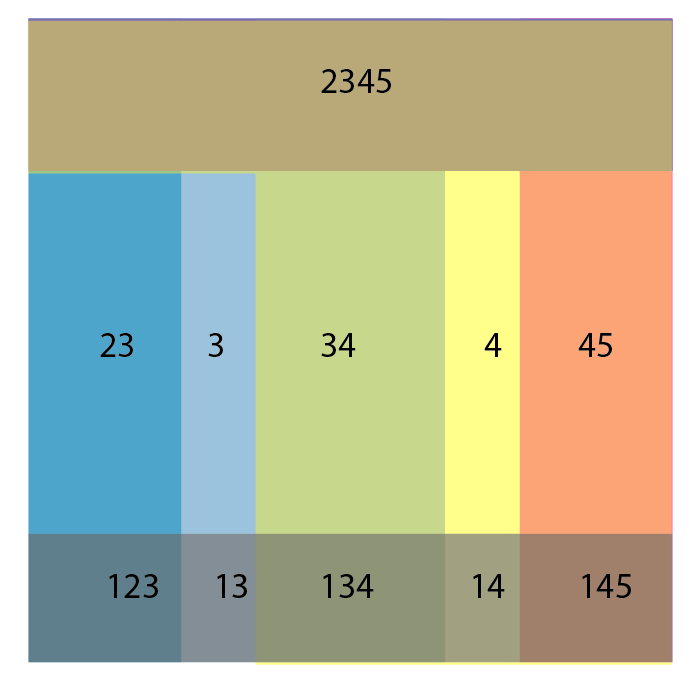}
\end{center}
\caption{A good-cover realization of the code~\eqref{eq:counterex-code}.  
More precisely, $U_i$ is the open set formed by the union of all regions in the figure that are labeled by a codeword that contains $i$.  For instance, $U_5$ is the union of the rectangular regions $2345$, $45$, and $145$.  It is straightforward to check that $\{U_i\}$ forms a good cover. \label{fig:good-cover}}
\end{figure}

Thus, it remains only to show that the code~\eqref{eq:counterex-code} is not convex.  
Assume for contradiction that there exist convex, open sets $U_1, U_2, \dots, U_5$ in $\mathbb{R}^d$ which realize the code $\mathcal{C}$.  Our strategy is to derive a contradiction with Lemma~\ref{lem:cvx-sets-R2}.

For $\sigma \subseteq [5]$, we define $U_{\sigma}:= \cap_{i \in \sigma} U_i$.  Let $p_{123} \in U_{123}$, 
$p_{145} \in U_{145}$, and
$p_{2345} \in U_{2345}$; these three points exist and are distinct, because 123, 134, and 2345 are maximal codewords of $\mathcal{C}$ (so, $U_{123}$ does not intersect $U_{145}$, and so on).  Also, we may assume that these three points are not collinear: if they were, since the $U_{\sigma}$'s are open sets, one of the points can be perturbed slightly.  

We claim that the line segment $L=\overline{p_{123} p_{145}}$ intersects $U_{134}$.  Indeed, by convexity, $L$ is contained in $U_1$, which is covered by $U_3$ and $U_4$, so $L$ is too.  The only way for $L$ (a connected set) to be covered by two open sets is if the sets overlap and this overlap intersects $L$, i.e.\ $L \cap U_3 \cap U_4 \neq \emptyset$.  Note that $L \cap U_3 \cap U_4 \subseteq U_{134}$, so we are done.
Let $p_{134} $ be a point in that intersection; thus, $p_{134}$ is in $U_{134}$.  So far, our points have the following configuration:
\begin{center}
\includegraphics[scale=0.26]{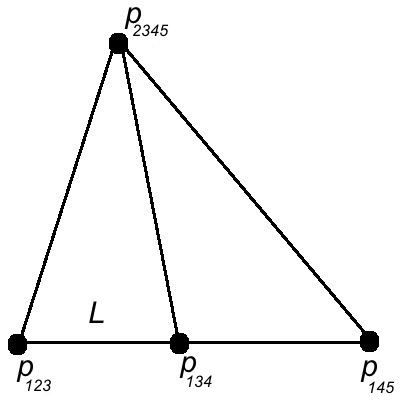}
\end{center}

With Lemma~\ref{lem:cvx-sets-R2} in mind, we define
	\begin{align*} 
	W_1:=U_2 \cap U_3~, \quad \quad 
	W_2:=U_3 \cap U_4~, \quad \quad 
	W_3:=U_4 \cap U_5~. 
	\end{align*}

First, note that the line extending the line segment  $L=\overline{p_{123} p_{145}}$ contains 
$p_{123} \in W_1$,  
$p_{134} \in W_2$,
 and $p_{145} \in W_3$. 
Also, we claim that this line does not intersect the triple-intersection.  Indeed, if the line contains a point $r$ in the triple-intersection $W_1 \cap W_2 \cap W_3 =\A $, 
then $r$ can not be on the line segment $L=\overline{p_{123} p_{145}}$, because $L$ is in $U_1$ and 2345 is a maximal codeword of $\mathcal{C}$.  Also, $r$ can not lie to the ``left'' of $p_{123}$, because then $p_{123} \in \overline{p_{134} r} \subseteq U_5$, which is a contradiction (123 is a maximal codeword of $\mathcal{C}$).  Similarly, $r$ can not lie to the ``right'' of $p_{134}$.

So, to reach a contradiction with Lemma~\ref{lem:cvx-sets-R2}, we need only check that the sets $W_i$ satisfy the hypotheses of Lemma~\ref{lem:cvx-sets-R2}.  
First, as intersections of convex open sets, 
the $W_i$'s are convex open sets in $\mathbb{R}^d$.  
Next, the triple-intersection $W_1 \cap W_2 \cap W_3 = \A$ 
contains the point $p_{2345}$, so is nonempty.  Finally, to show that the double-intersections coincide with the triple-intersection, it suffices to show that $W_i \cap W_j \subseteq \A$ for all $1 \leq i < j \leq 3$ ($\supseteq$ holds by construction). 
First, $W_1 \cap W_3 =\A$ 
by construction.  Next, $W_1 \cap W_2 = U_2 \cap U_3 \cap U_4 
\subseteq\A$, 
because $2345$ is the only codeword in $\mathcal{C}$ that contains $234$.  Analogously, $W_2 \cap W_3 \subseteq \A$, so we are done. 
\end{proof}

The proof of Theorem~\ref{thm:counterex} shows that there is a ``second-level'' obstruction: for codes having the same simplicial complex as the counterexample code~\eqref{eq:counterex-code}, if the codeword 1 is not in the code, then both 234 and 345 must be in the code (for the code to be convex).  Here we see that adding these two codewords does make the code convex:

\begin{proposition} \label{prop:add-2}
The following neural code (obtained by adding the codewords 234 and 345 to the counterexample code~\eqref{eq:counterex-code}) is convex, and thus a good-cover code:
\begin{align*}
		\left\{ {\bf 2345,~123,~134, ~145},~234,~345,~13,~14,~23,~34,~ 45,~ 3,~ 4, ~\emptyset \right\}~.
\end{align*}
\end{proposition}
\begin{proof}
First note that the following is a convex realization for the code obtained by adding codewords 234, 345, 2, and 5 to the code~\eqref{eq:counterex-code}:
\begin{center}
	\begin{tikzpicture}[scale=.8]
    \draw[rotate=15] (0, 0) rectangle (4, 1);
    \draw[rotate=30] (0, 0) rectangle (4, 1);
    \draw[rotate=45] (0, 0) rectangle (4, 1);
    \draw[rotate=60] (0, 0) rectangle (4, 1);
	\draw [gray,line width=12] (1.5,2.8) -- (2.5,1.5);
    \node [above] at (1.9,3.6) {$U_2$};
    \node [above] at (2.9,2.9) {$U_3$};
    \node [below] at (3.6,2.8) {$U_4$};
    \node [below] at (4.1,1.9) {$U_5$};
    \node [below] at (2.8,2.45) {$U_1$};
	\end{tikzpicture}
\end{center}
Now it is straightforward to see that if $U_2$ is replaced by $U_2 \cap U_3$ and $U_5$ is replaced by $U_4 \cap U_5$, then the resulting sets $U_i$ would be a convex realization that verifies our claim.
\end{proof}

Another way to make the counterexample code~\eqref{eq:counterex-code} convex, is simply to add the codeword $1$, which makes the code max-intersection-complete (Proposition~\ref{prop:add-1} below).  
(In contrast, the code in Proposition~\ref{prop:add-2} is max-intersection-{\em incomplete}.) 
Then, from the proof of~Theorem~\ref{thm:counterex}, a line from $p_{123}$ to $p_{145}$ no longer must pass through a point $p_{134}\in U_{134}$. Thus, we do not have the same forced structure which we used to show that $\CC$ is not convex. 
        
\begin{proposition} \label{prop:add-1}
The following neural code (obtained by adding the codeword $1$ to the counterexample code~\eqref{eq:counterex-code}) is convex, and thus a good-cover code:
\begin{align*}
		\left\{ {\bf 2345,~123,~134, ~145},~13,~14,~23,~34,~ 45,~1,~ 3,~ 4, ~\emptyset \right\}~.
\end{align*}
\end{proposition}
\begin{proof}
Consider the following construction. Let $U_{2345}$ be an open cube in $\R^3$, centered at the origin with sides parallel to the $x$, $y,$ and $z$ axes. Then let $U_{23}$ be a rectangular prism created by extending the cube in the positive $x$ direction, let $U_{34}$ be a rectangular prism created by extending the cube in the positive $y$ direction, and let $U_{45}$ be a rectangular prism created by extending the cube in the positive $z$ direction. Then let $U_3$ be the convex hull of $U_{23}$ and $U_{34}$, and let $U_4$ be the convex hull of $U_{34}$ and $U_{45}$. Thus, we have the codewords 2345, 23, 34, 45, 3, and 4. Now, pick points $p_{23}\in U_{23}$, $p_{34}\in U_{34}$, and $p_{45}\in U_{45}$, and let $U_1$ be an open $\epsilon$-neighborhood of the convex hull of $p_{23}$, $p_{34}$, and $p_{45}$. This creates regions corresponding to the codewords 1, 123, 134, and 145. This also must create regions corresponding to the codewords 13 and 14, since a line from 123 to 134 must pass through 13 and a line from 134 to 145 must pass through 14. It is straightforward to check that for $\epsilon$ sufficiently small, all the above codewords remain, and no new ones are created.  Therefore, this is a convex realization of the code.
\end{proof}

\begin{remark} \label{rmk:2-minimal}
The simplicial complex of the counterexample code $\mathcal{C}$ from~\eqref{eq:counterex-code}) is the 
first example of a simplicial complex $\Delta$ which has more than one minimal convex code with simplicial complex equal to $\Delta$.  Namely, both $\mathcal{C} \cup \{234, ~345\}$ and $\mathcal{C} \cup \{1 \}$ are minimal among all convex codes with simplicial complex equal to $\Delta(\mathcal{C})$.
\end{remark}

\section{Enumerating and classifying codes on five  neurons} \label{sec:enum}
Previous classification of neural codes on up to four neurons as having local obstructions or not was done by hand~\cite{what-makes,neural_ring}. We automated this process for codes on five neurons using {\tt SageMath} \cite{sage}: we first enumerated all simplicial complexes on five vertices, up to symmetry, and then computed for each simplicial complex the list of mandatory codewords.  We describe these procedures in more detail below.  

Our source code is available on {\tt SageMathCloud} at:
\\
\url{https://cloud.sagemath.com/projects/8fdd3fd5-5b65-4059-8e3f-95e02b104e84/files/obstructions_to_convexity_in_neural_codes.sagews}

\noindent
The list of simplical complexes and their mandatory codewords can be found in the Appendix.

\subsection{Enumerating simplicial complexes}
For our list of simplicial complexes, we first used {\tt Nauty} \cite{nauty} to generate a list of all connected simplicial complexes on up to five vertices, up to isomorphism. We worked only with connected simplicial complexes because any disconnected simplicial complex on five neurons can be expressed as the disjoint union of simplicial complexes on fewer than five vertices, and has thus been dealt with in previous work. We found that there is 
one connected simplicial complex on each of one and two vertices,  
three connected simplicial complexes on three vertices, 
14 connected simplicial complexes on  four vertices, and
157 connected simplicial complexes on five vertices. The simplicial complexes on up to four vertices appears in~\cite[Figure~4]{what-makes}. 

Including disconnected simplicial complexes brings these counts to 1, 2, 5, 20, and 180, respectively, which agrees with the corresponding sequence in the on-line encyclopedia of integer sequences~\cite[A261005]{oeis}.
The next term in this sequence tells us that there are 16,143 
simplicial complexes on six vertices, up to isomorphism. We do not produce a list of all simplicial complexes on six vertices, since we view this as too large a data set to be useful at the moment.

\subsection{Checking for (homological) local obstructions and enumerating mandatory codewords}
Our algorithm for computing the mandatory codewords of a simplicial complex follows closely the characterization local obstructions via maximal codewords (Proposition~\ref{prop:loc-obs}). It is important to recognize that Algorithm~\ref{alg} is not an algorithm in the strict sense. Eventually, it will fail (for some simplicial complexes) because having trivial homology groups does not imply contractibility in general, but does for simplicial complexes on few vertices.
\begin{algorithm}[Heuristic algorithm for computing mandatory codewords] \label{alg}

~\\
\noindent
{\em Input:} a simplicial complex $\Delta$ 

\noindent
{\em Output:} the list of mandatory codewords of $\Delta$

\noindent
{\em Initialize:} {\sc Mandatory}$:= \emptyset$

\noindent
{\em Steps:}
\begin{enumerate}
\item List all nonempty intersections of facets of $\Delta$.
\item Compute the link of each nonempty intersection of facets.
\item Compute the reduced homology groups of each link.
\item For each reduced homology group which is nontrivial, add the corresponding intersection of facets to {\sc Mandatory}.
\item Return {\sc Mandatory}. 
\end{enumerate}
\end{algorithm}

\section{The tree criterion for max-intersection-incomplete codes} \label{sec:tree}
Recall that mandatory codewords of a simplicial complex necessarily are intersections of facets, but not vice-versa.  
In this section, we present a new criterion for an intersection of facets to be non-mandatory (Theorem~\ref{thm:tree}), and then show that this criterion characterizes all intersection-{\em incomplete} codes without local obstructions for codes on at most five neurons (Theorem~\ref{thm:classify-5}).  However, this criterion is insufficient for codes on $6$ or more neurons (Example~\ref{ex:6}).  We end the section by showing that certain codes satisfying our new criterion are in fact convex with minimal embedding dimension~1 (Proposition~\ref{prop:path-cvx}).  

We begin with several definitions.  First, recall that for a finite collection $\mathcal{W}=\{W_1, W_2, \dots, W_n\}$ of subsets of a set $X$, the {\em nerve} of $\mathcal{W}$ is the simplicial complex that records the intersection patterns among the sets:
	\begin{align*}
	\mathcal{N}(\mathcal{W}) ~:=~ \left\{ I \subseteq [n] \mid \bigcap_{i \in I} W_i {\rm ~is~nonempty} \right\}~.
	\end{align*}
\noindent
Next, for a face $\sigma$ of a simplicial complex $\Delta$, we let $\mathcal{M}_{\Delta}(\sigma)$ denote the set of all facets (maximal faces) of $\Delta$ that contain $\sigma$.  Thus, if $\Delta=\Delta(\CC)$ for some code $\CC$, then $\mathcal{M}_{\Delta}(\sigma)$ is the set of all maximal codewords of $\CC$ that contain the codeword $\sigma$.
Finally, we let $\mathcal{L}_{\Delta} (\sigma)$ denote the set of facets of ${\rm Lk}_{\Delta}(\sigma)$; it is straightforward to see that these facets are obtained by removing $\sigma$ from the facets of $\Delta$ that contain $\sigma$:
	\begin{align*}
	\mathcal{L}_{\Delta} (\sigma) ~=~ \{ (M \setminus \sigma) \mid M \in \mathcal{M}_\Delta (\sigma) \}~.
    \end{align*}

We will need the following version of the nerve lemma~\cite[Theorem 6 and Remark 7]{Bjorner}:
\begin{lemma} \label{lem:nerve}
Let $\mathcal{D}=(\Delta_i)_{i \in I}$ be a family of sub-complexes of a connected simplicial complex $\Delta$ for which:
	\begin{enumerate}[(1)]
    \item $\Delta = \bigcup\limits_{i \in I} \Delta_i$, and 
    \item every finite nonempty intersection $\Delta_{i_1} \cap \Delta_{i_2} \cap \cdots \cap \Delta_{i_k}$ is contractible.
	\end{enumerate}
Then $\Delta$ is homotopy-equivalent to the nerve of the $\Delta_i$'s:
$	\Delta \simeq \mathcal{N}(\mathcal{D}).$
\end{lemma}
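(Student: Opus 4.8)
The plan is to recognise Lemma~\ref{lem:nerve} as the simplicial, or ``closed cover'', form of the nerve lemma and to obtain it by checking the hypotheses of the cited theorem of Björner~\cite{Bjorner}, recording for completeness the mechanism behind that theorem. The only two things one has to observe in order to apply the cited result are that $\mathcal D$ is a cover of $\Delta$ \emph{by subcomplexes} (so the cover is locally finite on every simplex and each inclusion $\Delta_{i_1}\cap\cdots\cap\Delta_{i_k}\hookrightarrow\Delta_{i_1}\cap\cdots\cap\Delta_{i_{k-1}}$ is a cofibration), and that every nonempty finite intersection is contractible by hypothesis~(2); both are immediate, and $\mathcal N(\mathcal D)$ is by construction the nerve of this cover. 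So the real content is the nerve lemma itself.

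The argument I would give for it runs through homotopy colimits. Let $P$ be the \emph{intersection poset}: the set of nonempty finite $J\subseteq I$ for which $\Delta_J:=\bigcap_{i\in J}\Delta_i$ has nonempty geometric realisation, ordered by reverse inclusion, together with the diagram $J\mapsto\Delta_J$ whose structure maps are the subcomplex inclusions $\Delta_J\hookrightarrow\Delta_{J'}$ for $J'\subseteq J$. Step one: since $\mathcal D$ covers $\Delta$ and $\Delta_{\{i\}}\cap\Delta_{\{j\}}=\Delta_{\{i,j\}}$, one has $\Delta={\rm colim}_{P}\,\Delta_J$. Step two: all structure maps are inclusions of subcomplexes, hence cofibrations, so this diagram is cofibrant over $P$ and the canonical map ${\rm hocolim}_{P}\,\Delta_J\to{\rm colim}_{P}\,\Delta_J=\Delta$ is a homotopy equivalence (the projection lemma; this is the step that genuinely uses the simplicial structure). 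Step three: hypothesis~(2) says $J\mapsto\Delta_J$ is objectwise homotopy equivalent to the constant diagram at a point, so homotopy invariance of ${\rm hocolim}$ gives $\Delta\simeq{\rm hocolim}_{P}\,\Delta_J\simeq{\rm hocolim}_{P}\,\ast$, which is the order complex of $P$. Step four: the order complex of $P$ is the barycentric subdivision of the simplicial complex whose nonempty faces are exactly the $J\in P$, and that complex is precisely $\mathcal N(\mathcal D)$; hence $\Delta\simeq|\mathcal N(\mathcal D)|$, as claimed. Connectedness of $\Delta$ is part of the cited formulation and plays no essential role in this argument; Björner's Remark~7 is what licenses an infinite index set $I$ in steps one and two.

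I expect the only real obstacle to be step two, i.e.\ justifying that the colimit of the diagram of subcomplexes already computes the homotopy colimit. For a \emph{finite} family $I=\{1,\dots,m\}$ one can avoid the general machinery by inducting on $m$: write $\Delta=\Delta_1\cup(\Delta_2\cup\cdots\cup\Delta_m)$, note this is a homotopy pushout along $\Delta_1\cap(\Delta_2\cup\cdots\cup\Delta_m)=\bigcup_{j\ge 2}(\Delta_1\cap\Delta_j)$ (a union of subcomplexes whose finite intersections are again contractible or empty by~(2)), apply the inductive hypothesis to the two larger pieces and to this intersection, and match the resulting homotopy pushout with $\mathcal N(\mathcal D)=\mathcal N(\{\Delta_2,\dots,\Delta_m\})\cup{\rm st}_{\mathcal N(\mathcal D)}(1)$, using that ${\rm st}_{\mathcal N(\mathcal D)}(1)$ is a cone and ${\rm Lk}_{\mathcal N(\mathcal D)}(1)=\mathcal N(\{\Delta_1\cap\Delta_j\}_{j\ge 2})$. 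The catch there, and the reason the hocolim formulation is cleaner, is that the inductive equivalences must be \emph{natural} with respect to these inclusions for the pushouts to be identified; one either strengthens the induction to produce such natural equivalences or reuses the same cofibration/gluing input. I would also flag explicitly why the naive shortcut of thickening each $|\Delta_i|$ to an open neighbourhood and invoking the classical open-cover nerve lemma fails: open stars (or metric $\varepsilon$-neighbourhoods) of the $|\Delta_i|$ can create spurious nonempty intersections — two disjoint arcs of a triangulated circle have neighbourhoods meeting in two components — so the nerve of the thickened cover need not equal $\mathcal N(\mathcal D)$, and the closed-cover (subcomplex) argument above is exactly what avoids this pathology.
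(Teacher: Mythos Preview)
Your proposal is correct, but it does considerably more than the paper: the paper offers no proof of Lemma~\ref{lem:nerve} at all, merely citing it as \cite[Theorem~6 and Remark~7]{Bjorner}. So there is no ``paper's own proof'' to compare against; you have supplied one where the authors were content to quote the literature.

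For what it is worth, your sketch is the standard modern route (diagram of subcomplex inclusions, projection lemma to identify colim with hocolim, then collapse each contractible $\Delta_J$ to a point and recognise the result as the order complex of the intersection poset, i.e.\ the barycentric subdivision of $\mathcal N(\mathcal D)$). Your caveats about naturality in the inductive alternative and about the failure of the naive open-thickening trick are accurate and well placed. If anything, you could shorten the write-up: since the lemma is being invoked only for the very specific situation~\eqref{eq:homotopy-eq} where $\mathcal D$ is the finite family of facets of a link, the finite inductive argument you outline (or even a direct appeal to Bj\"orner's stated theorem) already suffices, and the discussion of infinite $I$ via Remark~7 is unnecessary for the paper's purposes.
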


We apply Lemma~\ref{lem:nerve} in the following setting: 
the simplicial complex $\Delta$ is a link ${\rm Lk}_{\Delta}(\sigma)$,
and $\mathcal{D}$ is 
the set obtained from $\mathcal{L}_{\Delta} (\sigma)$ -- the set of facets $M \setminus \sigma$ of the link --
by replacing each $M \setminus \sigma$ by its simplicial complex $\Delta(\{ M \setminus \sigma\})$. 
Note that $\mathcal{D}$ and  $\mathcal{L}_{\Delta} (\sigma)$ have the same nerve:
$\mathcal{N}(\mathcal{D})= \mathcal{N}(\mathcal{L}_{\Delta} (\sigma))$.  
Next, condition (1) of the lemma holds, because the union of the facets is equal to the link.  
Condition (2) also holds, because nonempty intersections of any faces of a simplicial complex are themselves faces, which are contractible.  Therefore, the lemma and the equality of nerves mentioned above together imply that:
\begin{align} \label{eq:homotopy-eq}
{\rm Lk}_{\Delta}(\sigma)
~\simeq~
\mathcal{N} \left( \mathcal{L}_{\Delta} (\sigma) \right) ~.
\end{align}

\begin{theorem}[Tree criterion for mandatory codewords] \label{thm:tree}
Let $\Delta$ be a simplicial complex, and 
let $\sigma \in \Delta$ be a nonempty intersection of facets of $\Delta$.
If the nerve $\mathcal{N} \left( \mathcal{L}_{\Delta} (\sigma) \right) $ is a tree graph, then $\sigma$ is not a mandatory codeword of $\Delta$, i.e.\ ${\rm Lk}_{\Delta}(\sigma)$ is contractible.
\end{theorem}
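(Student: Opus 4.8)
The plan is to read off the conclusion directly from the homotopy equivalence~\eqref{eq:homotopy-eq}: since the nerve $\mathcal{N}(\mathcal{L}_{\Delta}(\sigma))$ is a tree graph, and every tree is contractible, \eqref{eq:homotopy-eq} would immediately give that ${\rm Lk}_{\Delta}(\sigma)$ is homotopy-equivalent to a contractible space, hence contractible; then $\sigma$ fails the defining condition of a mandatory codeword (Definition~\ref{def:mandatory}) and we are done. The one gap is that \eqref{eq:homotopy-eq} was obtained via the nerve lemma (Lemma~\ref{lem:nerve}), whose statement requires the \emph{ambient} simplicial complex to be connected; in our setting the ambient complex is ${\rm Lk}_{\Delta}(\sigma)$, so the substantive part of the proof is to verify this connectedness from the tree hypothesis.

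First I would show that ${\rm Lk}_{\Delta}(\sigma)$ is connected. Recall that the facets of the link are exactly the sets $M \setminus \sigma$ with $M \in \mathcal{M}_{\Delta}(\sigma)$, that each such facet is a simplex and hence connected, and that their union is all of ${\rm Lk}_{\Delta}(\sigma)$. Two facets of the link share a vertex precisely when the two corresponding vertices of $\mathcal{N}(\mathcal{L}_{\Delta}(\sigma))$ are adjacent. Since a tree graph is connected, any two facets of the link are joined by a finite chain of facets in which consecutive members intersect; a union of simplices linked by such a chain is connected, so ${\rm Lk}_{\Delta}(\sigma)$ is connected.

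Next I would apply Lemma~\ref{lem:nerve} with the connected complex taken to be ${\rm Lk}_{\Delta}(\sigma)$ and the family $\mathcal{D}$ taken to be $\mathcal{L}_{\Delta}(\sigma)$: condition~(1) holds because the facets cover the link, and condition~(2) holds because every nonempty intersection of faces of a simplicial complex is again a face, hence a simplex, hence contractible. This yields ${\rm Lk}_{\Delta}(\sigma) \simeq \mathcal{N}(\mathcal{L}_{\Delta}(\sigma))$, which is exactly \eqref{eq:homotopy-eq}, now justified in our setting. By hypothesis the right-hand side is a tree graph, which is contractible, and contractibility is a homotopy invariant, so ${\rm Lk}_{\Delta}(\sigma)$ is contractible; therefore $\sigma$ is not a mandatory codeword of $\Delta$ by Definition~\ref{def:mandatory}.

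The main obstacle is precisely the connectedness bookkeeping of the second paragraph, together with a couple of harmless degenerate cases worth a sentence: if $\mathcal{M}_{\Delta}(\sigma)$ consists of a single facet, then $\mathcal{N}(\mathcal{L}_{\Delta}(\sigma))$ is a single vertex (a trivial tree) and ${\rm Lk}_{\Delta}(\sigma)$ is a simplex, still contractible; and the borderline case in which $\sigma$ is itself a facet need not be considered, since facets are always codewords of $\CC$ and hence never witness a local obstruction (cf. Proposition~\ref{prop:loc-obs}). Beyond these checks, the argument is simply an invocation of \eqref{eq:homotopy-eq}.
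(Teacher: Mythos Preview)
Your proposal is correct and follows essentially the same route as the paper: invoke the homotopy equivalence~\eqref{eq:homotopy-eq}, use that a tree is contractible, and conclude via Definition~\ref{def:mandatory}. The only difference is that you explicitly verify the connectedness hypothesis of Lemma~\ref{lem:nerve} before applying it, whereas the paper's derivation of~\eqref{eq:homotopy-eq} simply asserts conditions~(1) and~(2) and does not address connectedness; your extra paragraph is a legitimate (and correct) tightening of that step, not a different argument.
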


\begin{proof}
The theorem follows directly from the definition of mandatory (Definition~\ref{def:mandatory}), the homotopy-equivalence~\eqref{eq:homotopy-eq}, and the fact that tree graphs are contractible.

\end{proof}

\begin{remark}
The nerve $\mathcal{N} \left( \mathcal{L}_{\Delta} (\sigma) \right) $ is a tree graph if and only if all triple-wise intersections among facets of $\Delta$ that contain $\sigma$ (i.e. elements of $\mathcal{M}_{\Delta}(\sigma)$) are equal to $\sigma$.
\end{remark}

What Theorem~\ref{thm:tree} says is that in light of the characterization of local obstructions in terms of intersections of facets (Proposition~\ref{prop:loc-obs}), those codewords that satisfy the tree criterion do not generate local obstructions.  We rephrase this in Corollary~\ref{cor:tree} below via the following definition:

\begin{definition} \label{def:tree}
A code $\CC$ on $n$ neurons {\em satisfies the tree criterion} if for every codeword $\sigma \subseteq [n]$ that is not in $\CC$ and is a nonempty intersection of maximal codewords of $\CC$, the nerve $\mathcal{N}\left(\mathcal{L}_{\Delta} (\sigma) \right)$ is a tree graph.
\end{definition}

\begin{corollary} \label{cor:tree}
If a neural code $\CC$ satisfies the tree criterion, then $\CC$ has no local obstructions.
\end{corollary}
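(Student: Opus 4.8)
The plan is to derive the corollary immediately from the characterization of local obstructions in terms of intersections of maximal codewords (Proposition~\ref{prop:loc-obs}) together with the tree criterion for mandatory codewords (Theorem~\ref{thm:tree}). First I would recall that, by Proposition~\ref{prop:loc-obs}, the code $\CC$ has a local obstruction precisely when there exists a set of maximal codewords of $\CC$ whose intersection $\sigma$ is nonempty, satisfies $\sigma \notin \CC$, and has a non-contractible link ${\rm Lk}_{\Delta(\CC)}(\sigma)$. So it suffices to show that, once $\CC$ satisfies the tree criterion, no such $\sigma$ can exist.

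Next I would take an arbitrary candidate $\sigma$: a nonempty intersection of maximal codewords of $\CC$ with $\sigma \notin \CC$, and set $\Delta = \Delta(\CC)$. By the definition of the tree criterion (Definition~\ref{def:tree}), the nerve $\mathcal{N}\left(\mathcal{L}_{\Delta}(\sigma)\right)$ is a tree graph. Then Theorem~\ref{thm:tree} applies verbatim: its hypotheses are exactly that $\sigma$ is a nonempty intersection of facets of $\Delta$ (equivalently, of maximal codewords of $\CC$) and that this nerve is a tree. It follows that $\sigma$ is not a mandatory codeword of $\Delta$, i.e.\ ${\rm Lk}_{\Delta}(\sigma)$ is contractible. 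Hence $\sigma$ fails condition~(3) of Proposition~\ref{prop:loc-obs}, so it cannot witness a local obstruction. Since $\sigma$ ranged over all the candidates flagged by Proposition~\ref{prop:loc-obs}, we conclude that $\CC$ has no local obstructions.

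There is essentially no genuine obstacle here: all of the topological content---the homotopy equivalence ${\rm Lk}_{\Delta}(\sigma) \simeq \mathcal{N}\left(\mathcal{L}_{\Delta}(\sigma)\right)$ coming from the nerve lemma, and the contractibility of trees---has already been packaged into Theorem~\ref{thm:tree}. The only point that requires a moment's care is the bookkeeping: that the facets of $\Delta(\CC)$ coincide with the maximal codewords of $\CC$ (as noted right after the definition of $\Delta(\CC)$), so that ``intersection of maximal codewords'' in Definition~\ref{def:tree} and ``intersection of facets'' in Theorem~\ref{thm:tree} refer to the same objects, and that Definition~\ref{def:tree} quantifies over precisely the set of $\sigma$ that Proposition~\ref{prop:loc-obs} identifies as the only possible sources of a local obstruction. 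With that alignment in place, the proof is a one-line appeal to the two cited results.
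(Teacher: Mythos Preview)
Your proposal is correct and matches the paper's own reasoning: the corollary is stated as an immediate consequence of Theorem~\ref{thm:tree} combined with Proposition~\ref{prop:loc-obs}, which is exactly the argument you give. The only difference is that the paper does not spell out the details, whereas you have made the bookkeeping (facets $=$ maximal codewords, and that Definition~\ref{def:tree} covers precisely the $\sigma$'s appearing in Proposition~\ref{prop:loc-obs}) explicit.
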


\begin{example}
Returning again to our counterexample code~\eqref{eq:counterex-code},
recall from Example~\ref{ex:counterexample} that $1$ is the only nonempty intersection of maximal codewords that is not in the code, and that its link ${\rm Lk}_{\Delta(\CC)}(1) = \{\mathbf{23,34,45},2,3,4,5\}$ is a path of length 3.  Therefore, the set of facets of the link is $\mathcal{L}_{\Delta} (\sigma)= \{ {23,34,45} \}$, and hence the nerve $\mathcal{N}\left(\mathcal{L}_{\Delta} (\sigma) \right)$ is a path of length 2. We conclude that the code satisfies the tree criterion.  (We already knew that it has no local obstructions.)
\end{example}

\subsection{Using the tree criterion to classify codes with no local obstructions}
Max-intersection-complete codes (Definition~\ref{def:code-ppties}) vacuously satisfy the tree criterion (by Proposition~\ref{prop:loc-obs}).
Also, recall that max-intersection-complete codes on at most four neurons are precisely the codes with no local obstructions; in fact, they are convex (by Proposition~\ref{prop:at-most-4}). Thus, the converse of Corollary~\ref{cor:tree} is true for up to four neurons.  In fact, our next result extends this converse to five neurons.  However, this converse is false for codes on six or more neurons (Example~\ref{ex:6}). 

\begin{theorem}[Tree criterion characterization of codes with no local obstructions on up to five neurons] \label{thm:classify-5}
For a code $\CC$ on at most five neurons, $\CC$ has no local obstructions
if and only if
$\CC$ satisfies the tree criterion.
\end{theorem}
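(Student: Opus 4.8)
The plan is to prove the only non-trivial direction: if $\CC$ has no local obstructions (on at most 5 neurons), then $\CC$ satisfies the tree criterion. Equivalently, we must show: for every codeword $\sigma$ that is a nonempty intersection of maximal codewords of $\CC$, if $\sigma \notin \CC$, then the nerve $\mathcal{N}(\mathcal{L}_{\Delta}(\sigma))$ is a tree graph. By Proposition~\ref{prop:loc-obs}, the hypothesis that $\CC$ has no local obstructions gives us that ${\rm Lk}_{\Delta}(\sigma)$ is contractible for every such $\sigma$; via the homotopy equivalence~\eqref{eq:homotopy-eq} this means $\mathcal{N}(\mathcal{L}_{\Delta}(\sigma))$ is a contractible simplicial complex. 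So the crux is: why is a contractible nerve of this particular form forced to be a $1$-dimensional tree, given that we have only 5 neurons to work with?

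First I would reduce to a question about the facets of the link. Write $\mathcal{M}_\Delta(\sigma) = \{M_1, \dots, M_m\}$ for the maximal codewords containing $\sigma$, so the facets of ${\rm Lk}_\Delta(\sigma)$ are $F_j := M_j \setminus \sigma$, living on the vertex set $V := \left(\bigcup_j M_j\right) \setminus \sigma \subseteq [5]$. Since $\sigma$ is the intersection of the $M_j$'s and is nonempty, $\sigma$ occupies at least one neuron, so $|V| \leq 4$. I would then argue that the nerve $\mathcal{N}(\{F_1,\dots,F_m\})$ has dimension at most $1$: a $2$-face of the nerve would require three facets $F_a, F_b, F_c$ with $F_a \cap F_b \cap F_c \neq \emptyset$; pulling back to $\Delta$, this says $M_a \cap M_b \cap M_c \supsetneq \sigma$, i.e.\ there is some neuron $v$ in all three $M$'s but not in $\sigma$. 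The key point (the Remark after Theorem~\ref{thm:tree}) is that such a triple-intersection is itself an intersection of maximal codewords strictly containing $\sigma$; I would need to rule this out by a careful case analysis showing that on $\leq 5$ neurons, whenever every intersection of maximals is such that the "missing" ones have contractible links, no such strictly-larger intersection can be missing — or more directly, show that if the nerve were genuinely $2$-dimensional and contractible, one could produce a smaller missing intersection of facets whose link is non-contractible, contradicting the no-local-obstruction hypothesis. This is the step I expect to be the main obstacle: it is essentially a finite but delicate combinatorial classification, presumably leaning on the enumeration of the 157 connected simplicial complexes on 5 vertices from Section~\ref{sec:enum}, checking case-by-case that every missing intersection-of-facets has a link that is either non-contractible (hence a genuine obstruction, excluded by hypothesis) or a tree.

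Once the nerve is known to be a graph, the argument closes quickly: a contractible graph is exactly a tree (connected and acyclic), so $\mathcal{N}(\mathcal{L}_\Delta(\sigma))$ is a tree graph, which is the tree criterion. I would structure the write-up as: (i) the easy direction via Corollary~\ref{cor:tree}; (ii) the reduction showing $|V| \leq 4$ and that it suffices to bound the dimension of the nerve; (iii) the computer-assisted or hand case-check — most efficiently phrased as: "inspecting the enumeration of Section~\ref{sec:enum} (see the appendix), for each of the 157 complexes $\Delta$ and each missing intersection-of-facets $\sigma$ with contractible link, $\mathcal{N}(\mathcal{L}_\Delta(\sigma))$ is $1$-dimensional" — together with the remark that dimension $\leq 1$ plus contractibility forces a tree; and (iv) noting that Example~\ref{ex:6} shows $6$ neurons is genuinely different (there a missing intersection can have a contractible but $2$-dimensional nerve), so the bound "at most 5" is sharp and no purely dimension-counting argument suffices in general.
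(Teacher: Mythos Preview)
Your plan is correct in outline—the brute-force check over the 157 connected simplicial complexes on 5 vertices would verify the claim—but the paper takes a much shorter route that bypasses that enumeration entirely. The ingredient you are missing is Lemma~\ref{lem:cone} (due to Curto et al.): whenever $\sigma$ is an intersection of facets of $\Delta$, the link ${\rm Lk}_\Delta(\sigma)$ is \emph{not} a cone. Combined with the observations you already made (the link is contractible and lives on at most $4$ vertices), this pins the link down completely: among the 28 simplicial complexes on at most $4$ vertices there is exactly one contractible non-cone, namely the path $P_3$ of length~3. Its three edge-facets have nerve the path $P_2$, a tree, and the proof is finished. So the paper's argument is a one-line classification on $\leq 4$ vertices, done by hand, rather than a computer sweep over the 5-vertex complexes. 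Your direct attempt to rule out $2$-faces of the nerve via triple intersections does not close on its own (as you correctly sensed); what closes it is precisely the non-cone property of the link, which you never invoked. Your approach would still succeed and has the virtue of being mechanical, but it trades a single structural lemma for a 157-case check and obscures why $5$ neurons is the cutoff: with the cone lemma in hand, the reason is simply that $P_3$ is the first contractible non-cone, and it needs $4$ vertices in the link, hence $5$ neurons total.
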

\begin{proof}[Proof of Theorem~\ref{thm:classify-5}]

\noindent
By Corollary~\ref{cor:tree}, a neural code which satisfies the tree criterion has no local obstructions. Our proof that the converse holds on up to five neurons requires the next lemma, due to Curto {\em et al.}~\cite[Lemma~2.11 and subsequent discussion]{what-makes}.  Recall that a simplicial complex $\Delta$ on $[n]$ is a {\em cone} if there exists $i \in [n]$ (called a {\em cone point}) such that every nonempty facet of $\Delta$ contains $i$. For instance, the simplicial complex $\{\{123\}, \{34\}\}$ is a cone with cone point 3, whereas the simplicial complex $\{\{123\}, \{34\}, \{45\}\}$ is not a cone, since there is no point contained in every facet. 

\begin{lemma} \label{lem:cone}
A simplicial complex $\Gamma$ is not a cone if and only if
there exists a simplicial complex $\Delta$ and a face $\sigma$ of $\Delta$ such that (1) $\sigma$ is an intersection of facets of $\Delta$ and (2) ${\rm Lk}_{\Delta}(\sigma)= \Gamma$.
\end{lemma}


Recall codes which satisfy the tree criterion have no local obstructions by Corollary~\ref{cor:tree}.  For the converse, let $\mathcal{C}$ be a code on $n$ neurons, where $n \leq 5$, that has no local obstructions.  Let $\sigma \subseteq [n]$ be a codeword that is not in $\mathcal{C}$ and is a nonempty intersection of maximal codewords of $\mathcal{C}$.  Then ${\rm Lk}_{\Delta(\CC)}(\sigma)$ is (1) a simplicial complex on at most four vertices (because $\sigma$ is nonempty), (2) contractible (by Proposition~\ref{prop:loc-obs}: $\mathcal{C}$ has no local obstructions) and thus nonempty, and (3) not a cone (by Lemma~\ref{lem:cone}).  Among the 28 simplicial complexes on up to four vertices (depicted in~\cite[Figure~4]{what-makes}), only one is a contractible non-cone: $P_3$, the path of length $3$.  Thus,  ${\rm Lk}_{\Delta(\CC)}(\sigma) \cong P_2$ (the path of length 2), which is a tree graph.  Hence, by definition, $\mathcal{C}$ satisfies the tree criterion. 
\end{proof}
%

\begin{example}[A convex code that the tree criterion misses] \label{ex:6}
Let 
\begin{align*}
\mathcal{C}~=~ \{ {\bf 124},~{\bf 134},~{\bf 145},~{\bf 156},~14,~15,~ \emptyset\}~,
\end{align*}
so $\Delta(\mathcal{C})$ is the cone, with cone point 1, over the following graph: 

\begin{center}
	\begin{tikzpicture}[scale=.5]
	\draw (0,0) --(1,1);
	\draw (0,2) --(1,1);
	\draw (2.5,1) --(1,1);
	\draw (2.5,1) --(4,1);
    \draw [fill] (0,0) circle [radius=0.08];
    \draw [fill] (1,1) circle [radius=0.08];
    \draw [fill] (0,2) circle [radius=0.08];
    \draw [fill] (2.5,1) circle [radius=0.08];
    \draw [fill] (4,1) circle [radius=0.08];
    \node [left] at (0,2) {$2$};
    \node [left] at (0,0) {$3$};
    \node [below] at (1,1) {$4$};
    \node [below] at (2.5,1) {$5$};
    \node [below] at (4,1) {$6$};
	\end{tikzpicture}
\end{center}
\noindent    
We claim that $\mathcal{C}$ has no local obstructions.  Indeed, the codeword $1$ is the unique nonempty intersection of maximal codewords of $\mathcal{C}$ that is not in $\mathcal{C}$, and ${\rm Lk}_{\Delta(\CC)}(1)$ is the above tree graph, which is contractible.

To complete the proof, we now show that $\mathcal{C}$ does not satisfy the tree criterion.  To see this, note that $\sigma=\{1\}$ is not in $\mathcal{C}$ and is the nonempty intersection of all maximal codewords of $\mathcal{C}$.  However, the nerve $\mathcal{N}\left(\mathcal{L}_{\Delta(\CC)} (\sigma) \right)$ is the following non-tree simplicial complex:

\begin{center}
	\begin{tikzpicture}[scale=.5]
    \draw [fill=gray, thick] (0,0) --(1,1) --(0,2) -- (0,0);
	\draw (2.5,1) --(1,1);
    \draw [fill] (0,0) circle [radius=0.08];
    \draw [fill] (1,1) circle [radius=0.08];
    \draw [fill] (0,2) circle [radius=0.08];
    \draw [fill] (2.5,1) circle [radius=0.08];
    \node [left] at (0,2) {$34$};
    \node [left] at (0,0) {$45$};
    \node [below right] at (1,1) {$24$};
    \node [below right] at (2.5,1) {$56$};
	\end{tikzpicture}
\end{center}
\noindent
This shows that $\mathcal{C}$ fails to satisfy the tree criterion.
\end{example}

\subsection{Proving convexity for a special case when the nerve is a path} \label{sec:path}
Here we show that a certain family of codes that satisfy the tree criterion is in fact convex with minimal embedding dimension 1.  The corresponding simplicial complexes only contain one non-mandatory codeword, and the corresponding nerve is the simplest type of tree: a path.  These simplicial complexes include cones over a path:

\begin{definition} \label{def:cone-path}
Let $\mathcal{M}$ denote the set of facets of a simplicial complex $\Delta$.  We say that $\Delta$ is a {\em coned path} if:
\begin{enumerate}
	\item the intersection of all facets is nonempty:  $\sigma:= \bigcap\limits_{M \in \mathcal{M}} M  \neq \emptyset$, and 
	\item the nerve $\mathcal{N}(\mathcal{L}_{\Delta}(\sigma))$ is a path graph (of length at least 1).
\end{enumerate}
\end{definition}

Next we will show that for each simplicial complex, there is a unique minimal code with no local obstructions, and then for the case of coned paths, prove that these minimal codes are convex with minimal embedding 1 (Proposition~\ref{prop:path-cvx}).  
To introduce minimal codes, recall from  Proposition~\ref{prop:loc-obs} that for a given simplicial complex $\Delta$, a code $\mathcal{C}$ with $\Delta(\CC) = \Delta$ has no local obstructions if and only if it contains the following code:
\begin{align*}
	\CC_{\rm min} (\Delta)~:=~ 
		\{ {\rm mandatory~codewords~of~} \Delta(\mathcal{C}) \} ~\cup~ \{ {\rm facets~of~} \Delta \} ~\cup~ \{\emptyset\} ~.
\end{align*}
In other words, $\CC_{\rm min} (\Delta)$ is the minimal code with simplicial complex $\Delta$ that has no local obstructions.
For instance, our counterexample code~\eqref{eq:counterex-code} is the minimal code for its simplicial complex.  
Now we determine this minimal code for the case of a coned path:
\begin{lemma} \label{lem:min-code}
Let $\sigma$ be the (nonempty) intersection of all the facets of a coned path $\Delta$.  Then:
\begin{enumerate}
	\item there exists an ordering of the facets of ${\rm Lk}_{\Delta}(\sigma)$:
	    $$\mathcal{L}_{\Delta}(\sigma) ~=~\{ N_1, N_2, \dots, N_m\}$$
	such that $N_i \cap N_{i+1} =: \tau_i$ is nonempty and all other pairwise intersections are empty: $N_i \cap N_j = \emptyset$ if $|i-j|\geq 2$, and
\item $\{ {\rm mandatory~codewords~of~}\Delta\}=\{ \sigma \cup \tau_i \mid 1 \leq i \leq m-1\}$.
\end{enumerate}
\end{lemma}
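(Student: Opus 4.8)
The plan is to prove part (1) first, as a statement about the combinatorial structure of the nerve of a path, and then derive part (2) by applying the nerve lemma (in the form of the homotopy-equivalence~\eqref{eq:homotopy-eq}) to the links of the candidate mandatory codewords. For part (1): by the definition of a coned path, $\mathcal{N}(\mathcal{L}_{\Delta}(\sigma))$ is a path graph. A path graph on $m$ vertices, by definition, has vertex set $\{N_1,\dots,N_m\}$ that can be linearly ordered so that $N_i$ and $N_j$ are joined by an edge precisely when $|i-j|=1$. Since the vertices of $\mathcal{N}(\mathcal{L}_{\Delta}(\sigma))$ are the facets $N_i$ of ${\rm Lk}_{\Delta}(\sigma)$, and an edge $\{N_i,N_j\}$ is present in the nerve exactly when $N_i\cap N_j\neq\emptyset$, this ordering immediately gives $N_i\cap N_{i+1}\neq\emptyset$ and $N_i\cap N_j=\emptyset$ for $|i-j|\geq 2$. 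So part (1) is essentially just unwinding the definition of ``path graph'' together with the definition of the nerve; I would state it carefully but it requires no real work. Set $\tau_i := N_i\cap N_{i+1}$ for $1\leq i\leq m-1$.

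For part (2), I want to show the set of mandatory codewords of $\Delta$ is exactly $\{\sigma\cup\tau_i \mid 1\leq i\leq m-1\}$. Recall a mandatory codeword is a nonempty intersection of facets of $\Delta$ whose link is non-contractible. First, the candidates $\sigma\cup\tau_i$ are intersections of facets: $\tau_i = N_i\cap N_{i+1}$ corresponds to $M_i\cap M_{i+1}$ where $M_i = N_i\cup\sigma$ and $M_{i+1}=N_{i+1}\cup\sigma$ are facets of $\Delta$ (the facets of $\Delta$ containing $\sigma$ are exactly the $N_j\cup\sigma$, since $\sigma$ is the intersection of all facets so every facet contains $\sigma$), hence $\sigma\cup\tau_i = M_i\cap M_{i+1}$. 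Conversely, any nonempty intersection of facets of $\Delta$ contains $\sigma$ and hence has the form $\sigma\cup\rho$ where $\rho$ is an intersection of some subcollection of the $N_j$'s. So the only candidate mandatory codewords are $\sigma$ itself and the sets $\sigma\cup\rho$ where $\rho=\bigcap_{j\in J}N_j$ is nonempty; by part (1) such a nonempty $\rho$ forces $J$ to be either a singleton (giving $\rho=N_j$, a facet, whose link is a point — contractible, hence not mandatory) or a consecutive pair $\{i,i+1\}$ (giving $\rho=\tau_i$). And $\sigma$ itself is not mandatory because ${\rm Lk}_{\Delta}(\sigma)$ is homotopy-equivalent to $\mathcal{N}(\mathcal{L}_{\Delta}(\sigma))$, a path, which is contractible. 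So it remains to check two things: that each $\sigma\cup\tau_i$ genuinely has a non-contractible link, and that intersections $\rho$ of three or more $N_j$'s, or of a consecutive pair, never produce anything outside the claimed list — the latter is handled by the above, so the crux is the former.

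To show ${\rm Lk}_{\Delta}(\sigma\cup\tau_i)$ is non-contractible, I would again invoke~\eqref{eq:homotopy-eq}: this link is homotopy-equivalent to $\mathcal{N}(\mathcal{L}_{\Delta}(\sigma\cup\tau_i))$, the nerve of the facets of the link. The facets of $\Delta$ containing $\sigma\cup\tau_i$ are exactly those $M_j = N_j\cup\sigma$ with $\tau_i\subseteq N_j$; since $\tau_i=N_i\cap N_{i+1}$ and, by part (1), $N_j\cap(N_i\cap N_{i+1})$ is empty for $j\notin\{i,i+1\}$ while $\tau_i$ is nonempty, the only such facets are $M_i$ and $M_{i+1}$. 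Thus $\mathcal{L}_{\Delta}(\sigma\cup\tau_i)$ consists of exactly two facets, $N_i\setminus\tau_i$ and $N_{i+1}\setminus\tau_i$, which are disjoint (their common elements would lie in $\tau_i$, which has been removed) and both nonempty (since $N_i\neq N_{i+1}$, being distinct facets of a simplicial complex, so $N_i\supsetneq\tau_i$ and likewise $N_{i+1}\supsetneq\tau_i$). Hence its nerve is two disjoint vertices — a disconnected, non-contractible space — so the link is non-contractible and $\sigma\cup\tau_i$ is mandatory.

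The main obstacle I anticipate is the bookkeeping in the correspondence between faces/facets of $\Delta$ containing a given $\sigma\cup\rho$ and faces/facets of the link ${\rm Lk}_{\Delta}(\sigma)$ — in particular being careful that ``facets of $\Delta$ containing $\sigma\cup\tau_i$'' translates correctly to facets of the link containing $\tau_i$, and that removing $\tau_i$ leaves genuinely nonempty disjoint sets. Everything else reduces to the nerve lemma~\eqref{eq:homotopy-eq} plus the elementary combinatorics of a path graph, so no homology computation is actually needed beyond ``path is contractible, two points is not.''
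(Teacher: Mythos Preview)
Your proposal is correct and follows essentially the same approach as the paper: part~(1) is just unwinding the definition of a path graph as a nerve, and part~(2) enumerates the nonempty intersections of facets, rules out $\sigma$ via the contractibility of the path, and shows each ${\rm Lk}_{\Delta}(\sigma\cup\tau_i)$ is disconnected (the paper describes it directly as a disjoint union of two simplices, you pass through the nerve to get two isolated vertices---same conclusion). Your treatment is slightly more careful than the paper's in explicitly handling the singleton-intersection case $\rho=N_j$, which the paper glosses over.
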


\begin{proof}
Part 1 follows immediately from the fact that the nerve $\mathcal{N}(\mathcal{L}(\sigma))$ is a path graph.
For part 2, we begin by noting that $\mathcal{M}=\{\sigma \cup N_i \mid 1 \leq i \leq m \}$ is the set of facets of $\Delta$.  By part 1, the only nonempty intersections of facets are $\sigma$ and the $\sigma \cup \tau_i$'s, and $\sigma$ is non-mandatory by the tree criterion (Theorem~\ref{thm:tree}).  So, we need only show that ${\rm Lk}_{\Delta}(\sigma \cup \tau_i)$ is non-contractible (by Proposition~\ref{prop:loc-obs}).  Indeed, we will see that this link is disconnected. By part 1, the only facets that contain $\sigma \cup \tau_i$ are $\sigma \cup N_i$ and $\sigma \cup N_{i+1}$, and by definition neither facet contains the other.  Thus ${\rm Lk}_{\Delta}(\sigma \cup \tau_i)$ is the disjoint union of two full simplices: one on $N_i \setminus \tau_i$ and one on $N_{i+1}\setminus \tau_i$., and thus is disconnected.

\end{proof}

\begin{proposition} \label{prop:path-cvx}
If $\Delta$ is a coned path, then the minimal code 
$\CC_{\rm min} (\Delta)$ 
is
convex with minimal embedding dimension 1.
\end{proposition}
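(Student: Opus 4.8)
I would prove convexity by writing down an explicit realization of $\CC_{\rm min}(\Delta)$ by open intervals of $\R$, and then separately rule out a realization in dimension $0$. First, Lemma~\ref{lem:min-code} fixes the combinatorial picture: order the facets of ${\rm Lk}_{\Delta}(\sigma)$ as $N_1, \dots, N_m$ with $N_i \cap N_{i+1} =: \tau_i$ nonempty and $N_i \cap N_j = \emptyset$ whenever $|i-j| \ge 2$; then the facets of $\Delta$ are $M_i := \sigma \cup N_i$, the mandatory codewords are the $\sigma \cup \tau_i$ for $1 \le i \le m-1$, and
\[
\CC_{\rm min}(\Delta) ~=~ \{M_1, \dots, M_m\} ~\cup~ \{\sigma \cup \tau_i \mid 1 \le i \le m-1\} ~\cup~ \{\emptyset\}.
\]
Since the nerve is a path of length at least $1$ we have $m \ge 2$. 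Setting $\tau_0 = \tau_m := \emptyset$, the vertex set of $\Delta$ is the disjoint union $\sigma \sqcup \bigsqcup_{i=1}^m P_i \sqcup \bigsqcup_{i=1}^{m-1} \tau_i$, where $P_i := N_i \setminus (\tau_{i-1} \cup \tau_i)$ collects the vertices lying in the facet $M_i$ and in no other facet.

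\textbf{The realization.} Working in $X = \R$, I would think of the odd integers $1, 3, \dots, 2m-1$ as ``facet centers'' (the point $2i-1$ attached to $M_i$) and the even integers $2, 4, \dots, 2m-2$ as ``edge centers'' (the point $2i$ attached to $\sigma \cup \tau_i$), and assign to each vertex $v$ the open interval
\[
U_v ~:=~
\begin{cases}
(0.5,~2m-0.5) & \text{if } v \in \sigma,\\
(2i-1.5,~2i-0.5) & \text{if } v \in P_i,\\
(2i-1.5,~2i+1.5) & \text{if } v \in \tau_i.
\end{cases}
\]
Every endpoint here is a half-integer, so the open interval $(k-0.5,~k+0.5)$ between two consecutive half-integers lies inside $U_v$ exactly when $U_v$ contains the integer $k$, and is disjoint from $U_v$ otherwise; hence the atom cut out by the $U_v$'s is constant on each such interval and can be read off at the integer $k$. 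A direct check gives: at a facet center $k = 2i-1$ the covering sets are precisely those indexed by $\sigma$, by $P_i$, by $\tau_{i-1}$ (when $i \ge 2$), and by $\tau_i$ (when $i \le m-1$), whose union is $\sigma \cup P_i \cup \tau_{i-1} \cup \tau_i = \sigma \cup N_i = M_i$; at an edge center $k = 2i$ the covering sets are precisely those indexed by $\sigma$ and by $\tau_i$, giving the atom $\sigma \cup \tau_i$; and on $(-\infty, 0.5)$ and on $(2m-0.5,\infty)$ no $U_v$ is present, giving the atom $\emptyset$. Since $X = \R \supsetneq \bigcup_v U_v$, it follows that $\{U_v\}$ realizes exactly $\CC_{\rm min}(\Delta)$, so the code is convex with embedding dimension at most $1$.

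\textbf{The lower bound.} To see that dimension $1$ is optimal, observe that a realization in dimension $0$ would live in a space $X \subseteq \R^0$ with $X \supsetneq \bigcup_v U_v$; the only possibility is $X = \R^0$ (a single point) with $\bigcup_v U_v = \emptyset$, which forces every $U_v = \emptyset$ and makes $\emptyset$ the only codeword. But $\CC_{\rm min}(\Delta)$ contains the nonempty codeword $M_1$ (in fact the distinct nonempty codewords $M_1, M_2$, as $m \ge 2$), so it has no $0$-dimensional realization. Together with the construction above, this shows that the minimal embedding dimension of $\CC_{\rm min}(\Delta)$ equals $1$.

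\textbf{Main obstacle.} The substantive step is the verification that the interval assignment produces \emph{precisely} the atoms of $\CC_{\rm min}(\Delta)$, with no extra codewords; this is elementary interval arithmetic, but it needs a little care at the two end facets $M_1$ and $M_m$ (where $\tau_0$, respectively $\tau_m$, is empty) and at any interior facet $M_i$ with $P_i = \emptyset$, where one must still recover $M_i$ as the atom at $2i-1$ via the decomposition $N_i = P_i \sqcup \tau_{i-1} \sqcup \tau_i$.
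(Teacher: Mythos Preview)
Your proof is correct and follows essentially the same approach as the paper: both give an explicit realization of $\CC_{\rm min}(\Delta)$ by open intervals in $\mathbb{R}$, with the vertices of $\sigma$ receiving one long interval, the vertices of each $\tau_i$ receiving a medium interval straddling two ``facet blocks,'' and the remaining vertices of $N_i$ receiving a short interval inside the $i$-th block (the paper uses integer endpoints $(0,2m-1)$, $(2i-2,2i+1)$, $(2i-2,2i-1)$, while you shift by $\tfrac12$). One small point: your atom-by-atom verification covers the open intervals $(k-\tfrac12,k+\tfrac12)$ but not the half-integer boundary points themselves; since each $U_v$ has half-integer endpoints, the codeword at $k+\tfrac12$ is the intersection of the codewords at $k$ and $k+1$, which in every case is already one of $\sigma\cup\tau_i$ or $\emptyset$, so no new codewords appear---but it is worth saying so. Your explicit lower-bound argument ruling out dimension $0$ is a nice addition that the paper leaves implicit.
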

\begin{proof}
Let $N_i$ be as in Lemma~\ref{lem:min-code}, so that $\mathcal{M}=\{\sigma \cup N_1, \sigma \cup N_2, \dots,\sigma \cup N_m \}$ is the set of facets of $\Delta$. Also, let $\tau_i=N_i \cap N_{i+1}$ for $1 \leq i \leq m-1$.  Then, Lemma~\ref{lem:min-code} implies that $\CC_{\rm min} (\Delta) = \{ \sigma \cup \tau_i \mid 1 \leq i \leq m-1\} \cup \mathcal{M} \cup \{ \emptyset \}$.  
We show that this code is convex with minimal embedding dimension 1 via a convex realization in $\mathbb{R}$ so that the region for $\sigma \cup N_1$ is the interval $(0,1)$, the $\sigma \cup \tau_1$ region is $[1,2]$, the $\sigma \cup N_2$ region is $(2,3)$, the region for $\sigma \cup \tau_2$ is $[3,4]$, and so on.  More precisely, the receptive fields $U_j$ for each neuron $j$ are:

\begin{equation*}
U_j ~=~ 
\begin{cases}
	(0,~2m-1) & \text{ if~} j \in \sigma \\
	(2i-2,~2i+1) & \text{ if~} j \in \tau_i, \text{~for~some~} 1 \leq i \leq m-1 \\
	(2i-2,~2i-1) & \text{ if~} j \in \left( N_i \setminus \left( \bigcup_{k=1}^{m-1}\tau_k \right) \right),  \text{~for~some~} 1 \leq i \leq m~.
\end{cases}
\end{equation*}
It is straightforward to check that this code has the regions described above.

\end{proof}

\begin{example}
Let $\Delta$ be the cone, with cone point $1$, over the following simplicial complex:
\begin{center}
\begin{tikzpicture}[scale=.8]
	\draw (0,0) --(1,0);
	\draw (2,0) --(3,0);
    \draw [fill=gray, thick] (1,0) --(2,0)--(1.5,.8) -- (1,0);
    \draw [fill] (0,0) circle [radius=0.08];
    \draw [fill] (1,0) circle [radius=0.08];
    \draw [fill] (2,0) circle [radius=0.08];
    \draw [fill] (3,0) circle [radius=0.08];
    \draw [fill] (1.5,.8) circle [radius=0.08];
    \node [below] at (0,0) {$2$};
    \node [below] at (1,0) {$3$};
    \node [below] at (2,0) {$4$};
    \node [above] at (1.5,.8) {$5$};
    \node [below] at (3,0) {$6$};
	\end{tikzpicture}
\end{center}
Then $\Delta$ is a coned path; indeed, the intersection of all facets of $\Delta$ is $\{1\}$, and the nerve $\mathcal{N}(\mathcal{L}_{\Delta}(\sigma))$ is the following path of length 2:
\begin{center}
\begin{tikzpicture}[scale=.9]
	\draw (0,0) --(1,0);
	\draw (1,0) --(2,0);
    \draw [fill] (0,0) circle [radius=0.08];
    \draw [fill] (1,0) circle [radius=0.08];
    \draw [fill] (2,0) circle [radius=0.08];
    \node [below] at (0,0) {$23$};
    \node [below] at (1,0) {$345$};
    \node [below] at (2,0) {$56$};
	\end{tikzpicture}
\end{center}
Following the proof of Proposition~\ref{prop:path-cvx}, the minimal code is $\CC_{\rm min} (\Delta) = \{ {\bf 123,~1345,~156},~13,~15,~\emptyset\}$, and a 1-dimensional convex realization is as follows, where the open intervals $U_i$ are depicted above the real line for clarity: 
\begin{center}
\begin{tikzpicture}[scale=.7]
    \draw (-2,0) -- (7,0);
    \node [right] at (7,0) {$\mathbb{R}$};
	\draw (4,0.5) --(5,0.5);
	\node [left] at (4,0.5) {$U_6$};
	\draw (2,1) --(5,1);
	\node [left] at (2,1) {$U_5$};
	\draw (2,1.5) --(3,1.5);
	\node [left] at (2,1.5) {$U_4$};
	\draw (0,2) --(3,2);
	\node [left] at (0,2) {$U_3$};
	\draw (0,2.5) --(1,2.5);
	\node [left] at (0,2.5) {$U_2$};
	\draw (0,3) --(5,3);
	\node [left] at (0,3) {$U_1$};
\end{tikzpicture}
\end{center}
From left to right, the nonempty codewords are $123$, $13$, $1345$, $15$, and $156$.
\end{example}

\section{Discussion} \label{sec:openQ}

We resolved the problem of whether all neural codes with no local obstructions are convex.  
This motivates some related questions.  
First, are all neural codes with no local obstructions good-cover codes?  Even for codes on five neurons, this question is unresolved. Our enumeration of codes on five neurons without local obstructions is a step toward attacking these problems.

Next, are all max-intersection-complete codes (i.e.\ closed under maximal intersection) convex?  
We posed this question in an earlier version of this work, and since then 
Cruz {\em et al.} have answered it in the affirmative \cite{intersection-complete}.

The next question arises from codes of the form $\CC_{\rm min} (\Delta)$, which we recall is the smallest neural code $\mathcal{C}$ with no local obstructions such that $\Delta(\CC)=\Delta$. (We could also consider a minimal convex code with a given simplicial complex, although this is not, in general, unique: recall the codes discussed in Remark~\ref{rmk:2-minimal}.) 
If the minimal code on a simplicial complex is convex, are all codes on the same simplicial complex that contain the minimal code convex? 
Again, Cruz {\em et al.}~\cite{intersection-complete} recently answered this question in the affirmative.  Accordingly, this reduces the classification of convex codes to the determination of the minimal convex codes on each simplicial complex. 

Finally, we pose some questions that arise from our counterexample code.  First, our proof that this code is non-convex hinged upon a forced two-dimensional structure. Similar techniques have helped us to draw receptive fields for other five-neuron codes which permitted exclusion of an intersection of maximal codewords. Can we use these techniques to draw place fields in general, or to determine bounds on minimal embedding dimension? 

Additionally, can we characterize a new obstruction such that neural codes are convex if and only if they do not have this new obstruction?  Can we give an algebraic signature for such an obstruction (see~\cite{what-makes})?  Whatever the signature of our new obstruction is, it is clear that it cannot have as simple an interpretation as the local obstruction. While a local obstruction specifies a set of missing codewords, all of which must be added to the code to resolve the obstruction, the obstruction to convexity present in our counterexample code 
has a more complicated structure:
adding \emph{either} the codeword 1 or both the codewords 234 and 345 makes the code convex. One form this new obstruction could take is the requirement that convex codes either be max-intersection-complete or satisfy some other condition, yet to be determined.

\subsection*{Acknowledgments}
CL and ZW conducted this research as part of the NSF-funded REU in the Department of Mathematics at Texas A\&M University (DMS-1460766), in which AS served as mentor.
The authors benefited from guidance from Lauren Grimley and Jacob White, and from discussions with Carina Curto, Bryan F\'elix, Chad Giusti, Elizabeth Gross, Vladimir Itskov, Katie Morrison, William Kronholm, Sean Owen, and Nora Youngs.
The authors also thank Joseph Kung for editorial suggestions.  
AS was supported by the NSF (DMS-1312473/DMS-1513364).


\bibliographystyle{plain}
\bibliography{convexity-refs}

\section*{Appendix: Classification of codes on five neurons with no local obstructions}

Below we list (up to isomorphism) all 157 connected simplicial complexes on five vertices; here the simplicial complexes are listed by their facets, and the vertex set is $\{0,1,2,3,4\}$.  For each simplicial complex, we list the mandatory codes, the non-mandatory intersections of facets, the number of non-mandatory codewords, and the number of codes with no local obstructions.


\includepdf[pages=-]{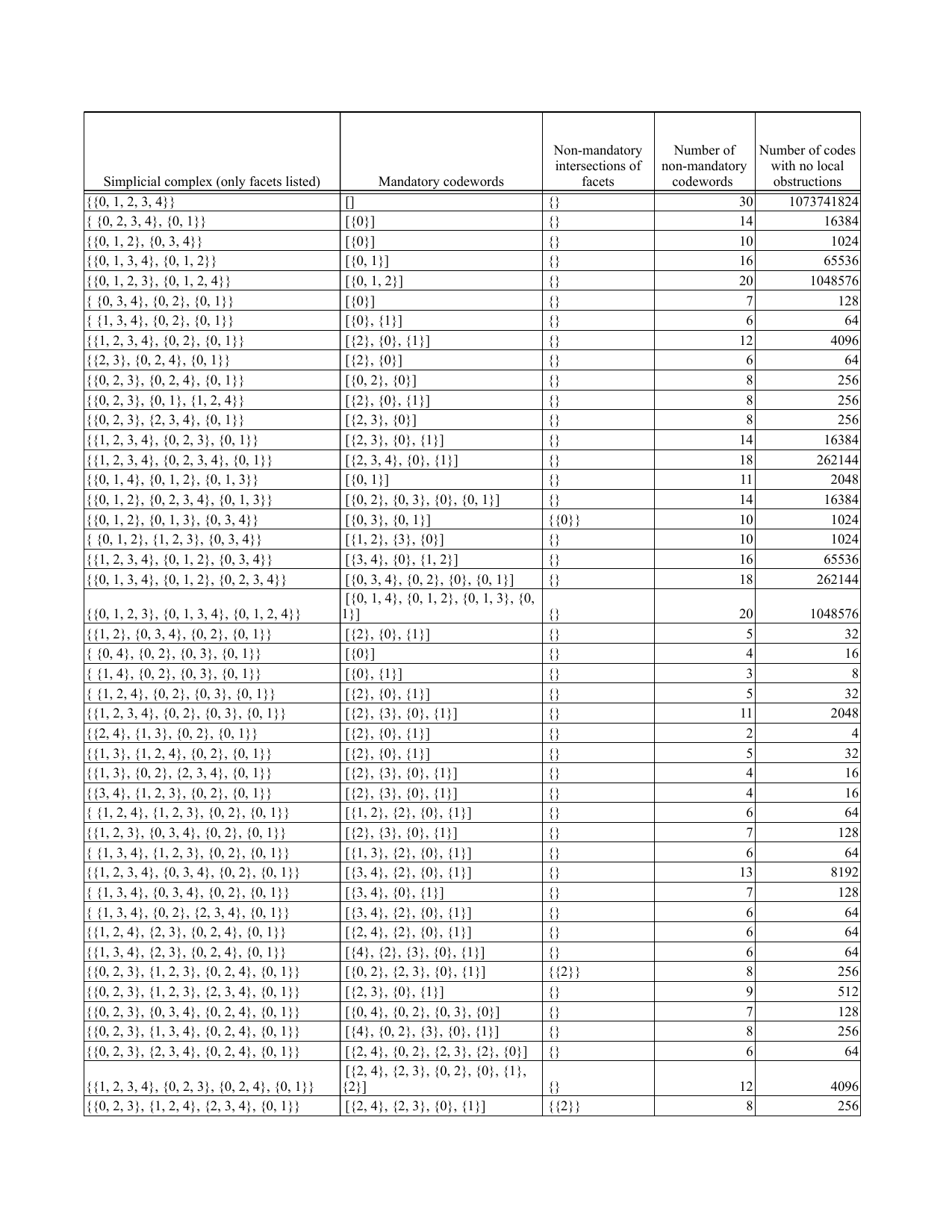}

\end{document}